\newtheorem{definition}{Definition}
\newtheorem{lemma}{Lemma}
\newtheorem{theorem}{Theorem}
\newtheorem{example}{Example}
\newenvironment{proof}[1]{\paragraph{Proof}#1}
\begin{document}

\normalem

\RRdate{November 2020}

\RRetitle{\bf Mixed Nondeterministic-Probabilistic Interfaces\thanks{This paper was prepared in 2019 while the second author was supported by an Inria International Chair.}}
\RRtitle{\bf Interfaces Mixtes Probabilistes-Nond\'eterministes}

\RRauthor{Albert Benveniste\thanks{INRIA/IRISA, Rennes, France. 
  {Albert.Benveniste@inria.fr}} \and Kim G. Larsen\thanks{Department of Computer
  Science, Aalborg University, Aalborg, Denmark} \and Jean-Baptiste Raclet\thanks{IRIT, Universit{\'e} de Toulouse, Toulouse, France}}
\authorhead{A. Benveniste, K.\,G. Larsen, J-B. Raclet}

\newtheorem{ccomment}{Comment}
\newtheorem{counterex}{Counterexample}

\newcommand{\noexcept}{free of exception}
\newcommand{\yesexcept}{subject to exceptions}
\newcommand{\probamay}{{\mbox{$\diamond$}}}
\newcommand{\supp}{\mathbf{supp}}
\newcommand{\produces}[2]{#1\leadsto{#2}}
\newcommand{\probamust}{\Box}
\newcommand{\wait}{w}
\newcommand{\location}{l}
\newcommand{\Locations}{{L}}
\newcommand{\Id}{\mathbb{I}}
\newcommand{\interthrough}[3]{#1{\leftarrow\!\!\!\!\!\!\!\!\frac{~#2~}{}\!\!\!\!\!\!\!\!\ra}#3}
\newcommand{\ID}{\mathcal{I}}
\newcommand{\dirac}{\delta}
\newcommand{\cons}{C}
\newcommand{\compat}{\,{\bowtie}\,}
\newcommand{\noint}{\mapsto}
\newcommand{\bidirnoint}{{\leftarrow\!\mapsto}}
\newcommand{\system}{S}
\newcommand{\tsystem}{T}
\newcommand{\Systems}{\mathcal{S}}
\newcommand{\PNS}{\Systems^{\sf np}}
\newcommand{\Labels}{\Lambda}
\newcommand{\rond}{\circ}
\newcommand{\bip}{\textsc{Bip}}
\newcommand{\speeds}{\textsc{Speeds}}
\newcommand{\cesar}{\textsc{Cesar}}
\newcommand{\autosar}{\textsc{Autosar}}
\newcommand{\wmodels}{\models_w}
\newcommand{\smodels}{\models_s}
\newcommand{\Guards}{\mathcal{G}}
\newcommand{\Funct}{\mathcal{F}}
\newcommand{\zero}{\mathbf{0}}
\newcommand{\abstraction}{\alpha}
\newcommand{\concret}{\gamma}
\newcommand{\region}{\vartheta}
\newcommand{\Regions}{\Theta}
\newcommand{\val}{\nu}
\newcommand{\Vals}{\mathcal{V}}
\newcommand{\Clocks}{\mathcal{H}}
\newcommand{\wleq}{\leq_w}
\newcommand{\sleq}{\leq_s}
\newcommand{\sgeq}{\geq_s}
\newcommand{\imodels}{\models^{\mbox{\sc\scriptsize m}}}
\newcommand{\emodels}{\models^{\mbox{\sc\scriptsize e}}}
\newcommand{\rmecs}{\textsc{rmecs}}
\newcommand{\mecs}{\textsc{mecs}}
\newcommand{\eca}{\mbox{event-clock automaton}}
\newcommand{\ecaa}{\mbox{event-clock automata}}
\newcommand{\pta}{\mbox{\textsc{pta}}}
\newcommand{\mta}{\mbox{\textsc{mta}}}
\newcommand{\mdp}{\mbox{\textsc{mdp}}}
\newcommand{\pa}{\mbox{\textsc{pa}}}
\newcommand{\spa}{\mbox{\textsc{spa}}}
\newcommand{\mnp}{\mbox{\textsc{mnp}}}
\newcommand{\mmdp}{\mbox{\textsc{mmdp}}}
\newcommand{\mcs}{\mbox{\textsc{mc}}}
\newcommand{\cmcs}{\mbox{\textsc{cmc}}}
\newcommand{\iomdp}{\mbox{i/o-\textsc{mdp}}}
\newcommand{\iohmdp}{\mbox{i/o-\textsc{hmdp}}}
\newcommand{\ioa}{\mbox{i/o-automaton}}
\newcommand{\ioaa}{\mbox{i/o-automata}}
\newcommand{\cmpi}{\textsc{cmpi}}
\newcommand{\mpi}{\textsc{mpi}}
\newcommand{\mmi}{Mixed Interface}
\newcommand{\fompi}{\textsc{fompi}}
\newcommand{\simuproj}{\xi}

\newcounter{cexample}
\newenvironment{cexample}[1][]{\refstepcounter{cexample}\par\medskip
   \noindent \textbf{Counterexample~\thecexample #1} \rmfamily}{\medskip}

%

\newcommand{\except}{exception}
\newcommand{\excpt}[1]{\top_{\!{#1}}}
\newcommand{\noexcpt}[1]{#1^{[\top]}}

\newcommand{\albert}[1]{{\color{blue}\textsf{Albert:~#1}}}
\newcommand{\removalbert}[1]{}
\newcommand{\finremovalbert}[1]{}
\newcommand{\benoit}[1]{\fxnote{{\color{purple}\textsf{Beno\^{\i}t:~#1}}}}
\newcommand{\myparagraph}[1]{~\\{\vspace*{-3mm}}{\\}\noindent\emph{#1}:}
\newcommand{\regionof}[1]{\mathbb{R}\left[#1\right]}
\newcommand{\uupper}[1]{\overline{#1}}
\newcommand{\llower}[1]{\underline{#1}}
\newcommand{\ggeq}[1]{^\geq{#1}}
\newcommand{\lleq}[1]{^\leq{#1}}
\newcommand{\ecaof}[1]{\mathbb{M}\left[#1\right]}
\newcommand{\prog}[1]{{\footnotesize\textsf{#1}}}
\newcommand{\noproba}[1]{#1^\Downarrow}
\newcommand{\wemph}[1]{{\color{white}#1}}
\newcommand{\bemph}[1]{{\color{blue}#1}}
\newcommand{\remph}[1]{{\color{red}#1}}
\newcommand{\gemph}[1]{{\color{green}#1}}
\newcommand{\blemph}[1]{{\color{black}#1}}
\newcommand{\prune}[1]{\left[#1\right]}
\newcommand{\consistent}[1]{[#1]}
\newcommand{\extpre}[1]{{\it pre}_?\!\left(#1\right)}
\newcommand{\intpre}[1]{{\it pre}_!\!\left(#1\right)}
\newcommand{\intpreindex}[2]{{\it pre}_!^{#2}\!\left(#1\right)}
\newcommand{\intprestar}[1]{{\it pre}_!^*\!\left(#1\right)}
\newcommand{\wwriter}[1]{{\color{red}#1}}

\newcommand{\boxeq}[1]{\beqq\mbox{\fbox{$\displaystyle #1$}}\eeqq}
\newcommand{\modint}{\mathcal{C}}
\newcommand{\Components}{\mathcal{M}}
\newcommand{\nComponents}{\mathcal{N}}
\newcommand{\Events}{\Sigma}
\newcommand{\trace}{\sigma}
\newcommand{\ttrace}{\tau}
\newcommand{\strategy}{\varphi}
\newcommand{\reldot}{\bullet}
\newcommand{\event}{\sigma}
\newcommand{\may}{{\it may}}
\newcommand{\mmay}{{\bf may}}
\newcommand{\maynot}{{\it mustnot}}
\newcommand{\must}{{\it must}}
\newcommand{\mmust}{{\bf must}}
\newcommand{\xor}{{\bf Xor}}
\newcommand{\Illegal}{{\it Illegal}}
\newcommand{\altsimu}{\Delta}
\newcommand{\pseudoq}{/\!\!/}
\newcommand{\iin}{{\rm in}}
\newcommand{\oout}{{\rm out}}
\newcommand{\lloc}{{\rm loc}}
\newcommand{\boolout}[2]{b_{#2}^{\mbox{\tiny\sf {#1}}}}
\newcommand{\compile}[1]{{\bf comp}\!\left(#1\right)}
\newcommand{\ital}[1]{{\Large $#1$}}
\newcommand{\empir}[1]{\widehat{#1}}
\newcommand{\exit}[1]{\textit{Exit}\left(#1\right)}
\newcommand{\simu}{\rho}
\newcommand{\liftsimu}[1]{{#1}}
\newcommand{\NMPlift}[1]{\,#1^\Systems\,}
\newcommand{\explicitliftsimu}[1]{{#1}^{{\proba}}}
\newcommand{\subsimu}[1]{\subseteq^{#1}}
\newcommand{\supsimu}[1]{\supseteq^{#1}}
\newcommand{\insimu}[1]{\in^{#1}}

\newcommand{\ndashrightarrow}{{\raisebox{0.4mm}{\mbox{\tiny$/$}}\hspace*{-4.2mm}\dashrightarrow}}
\newcommand{\nlongrightarrow}{{\raisebox{0.4mm}{\mbox{\tiny$/$}}\hspace*{-4.2mm}\longrightarrow}}
\newcommand{\reset}[2]{#1_{\downarrow{#2}}}
\newcommand{\longtrans}[3]{{#1}\,\frac{\;#2}{~}\!\!\!\!\!\ra\,{#3}}
\newcommand{\longregiontrans}[3]{{#1}\,\stackrel{#2}{=\!=\!\Longrightarrow}\,{#3}}
\newcommand{\regiontrans}[3]{{#1}\,\stackrel{#2}{\Longrightarrow}\,{#3}}
\newcommand{\trans}[4]{{#1}\stackrel{#2}{\longrightarrow}_{#4}{#3}}
\newcommand{\probatrans}[3]{{#1}\stackrel{#2}{\Ra}{#3}}
\newcommand{\fullprobatrans}[5]{\probatrans{{\trans{#1}{#2}{#3}{#4}}}{}{#5}}
\newcommand{\fullprobatransmay}[5]{\probatrans{{\transmayindex{#1}{#2}{#3}{#4}}}{}{#5}}
\newcommand{\fullprobatransmust}[5]{\probatrans{{\transmustindex{#1}{#2}{#3}{#4}}}{}{#5}}
\newcommand{\startrans}[3]{{#1}\stackrel{#2}{\longrightarrow}{\!\!}^*{#3}}
\newcommand{\transacc}[3]{{#1}{\stackrel{#2}{\longrightarrow}}{#3}}
\newcommand{\transmay}[3]{{#1}{\stackrel{#2}{\dashrightarrow}}{#3}}
\newcommand{\transmust}[3]{{#1}{\stackrel{#2}{\longrightarrow}}{#3}}
\newcommand{\transmayprime}[3]{{#1}{\stackrel{#2}{\dashrightarrow'}}{#3}}
\newcommand{\transmustprime}[3]{{#1}{\stackrel{#2}{\longrightarrow'}}{#3}}
\newcommand{\transmayindex}[4]{{#1}{\stackrel{#2}{\dashrightarrow}}_{#4}\,{#3}}
\newcommand{\uptransmayindex}[4]{{#1}{\stackrel{#2}{\dashrightarrow}^{\!#4}}{#3}}
\newcommand{\ntransmayindex}[4]{{#1}\;{\stackrel{#2}{~\ndashrightarrow\!\!}}_{\;#4}{#3}}
\newcommand{\transmustindex}[4]{{#1}{\stackrel{#2}{\longrightarrow}}_{#4}{#3}}
\newcommand{\uptransmustindex}[4]{{#1}{\stackrel{#2}{\longrightarrow}^{\!#4}}{#3}}
\newcommand{\nuptransmustindex}[4]{{#1}{\stackrel{#2}{~\nlongrightarrow\!\!}^{#4}}{#3}}
\newcommand{\ntransmustindex}[4]{{#1}\;{\stackrel{#2}{~\nlongrightarrow\!\!}}_{\;#4}\,{#3}}
\newcommand{\transwhatever}[3][]{{#2}{\stackrel{#3}{\rightsquigarrow}}_{#1}}
\newcommand{\ntrans}[2]{{#1}\stackrel{#2}{\nrightarrow}}

\newcommand{\ntransmust}[3][]{{#2}\stackrel{#3}{~\nlongrightarrow\!\!}_{{#1}}}
\newcommand{\ntransmay}[3][]{{#2}\stackrel{#3}{~\ndashrightarrow\!\!}_{#1}}

\newcommand{\transindex}[4]{{#1}\stackrel{#2}{\longrightarrow}_{\!{#4}}{#3}}
\newcommand{\probatransindex}[4]{{#1}\stackrel{#2}{\longmapsto}_{\!{#4}}{#3}}
\newcommand{\weaktransindex}[4]{{#1}\frac{~{#2}~~}{}\!\!\!\mbox{\tiny \raisebox{0.5mm}{$>$}}_{\!\!_{#4}}^{\!\!{\diamond}}{#3}}
\newcommand{\maytrans}[3]{\xymatrix{{#1}\ar[r]^-{#2}_>{\diamond}&\;{#3}}}

\newcommand{\ie}{{i.e.,}}
\newcommand{\eg}{{e.g.,}}
\newcommand{\causes}[3]{{#1}\stackrel{#2}{-\!\!-\!\!\!\longrightarrow}{#3}}
\newcommand{\rhoproj}[3]{\rho^{#1}_{#2 \mid #3}}
\newcommand{\invproj}[2]{{\bf pr}^{-1}_{#1}\left(#2\right)}
\newcommand{\proj}[2]{\mathbf{Pr}_{#1}\!\left(#2\right)}
\newcommand{\before}[2]{{#1}_{<{#2}}}
\newcommand{\after}[2]{{#1}_{\geq{#2}}}
\newcommand{\elim}[2]{\left[#1\right]_{#2}}
\newcommand{\bigelim}[2]{\bigl[#1\bigr]_{#2}}
\newcommand{\fusion}[2]{[\![#1]\!]_{#2}}
\newcommand{\delsmo}{\,{=\!\!\!|}\;}
\newcommand{\para}{\mathbin{\|}}
\newcommand{\mpara}{\mathbin{\times}}
\newcommand{\cpara}{\mathbin{\otimes}}
\newcommand{\regionra}{\Rightarrow}
\newcommand{\ra}{\rightarrow}
\newcommand{\ramay}{\dashrightarrow}
\newcommand{\ramust}{\rightarrow}
\newcommand{\Ra}{\Rightarrow}
\newcommand{\La}{\Leftarrow}
\newcommand{\bp}{\mathbf{p}}
\newcommand{\bx}{\mathbf{x}}
\newcommand{\bP}{\mathbf{P}}
\newcommand{\bProbas}{{\bf \Pi}}
\newcommand{\bR}{\mathbb{R}}
\newcommand{\bN}{\mathbb{N}}
\newcommand{\bZ}{\mathbb{Z}}
\newcommand{\dom}{D}
\newcommand{\domvar}{D_{\rm var}}
\newcommand{\domcont}{D_{{\rm cont}}}
\newcommand{\domproba}{D_{{\rm proba}}}
\newcommand{\eproof}{\hfill$\Box$ \smallskip}
\newcommand{\eitemproof}{\hfill$\Box$}
\newcommand{\behav}{E}
\newcommand{\yes}{\mbox{\sc yes}}
\newcommand{\no}{\mbox{\sc no}}
\newcommand{\ko}{\mbox{\sc ko}}
\newcommand{\ok}{\mbox{\sc ok}}
\newcommand{\nnot}{\mbox{\it Neg}}
\newcommand{\kkill}{\mbox{\it Kill}}
\newcommand{\Emax}{E^\star}
\newcommand{\zzero}{{_0}}
\newcommand{\partialmap}{\rightharpoonup}
\newcommand{\bC}{{\mathbf C}}
\newcommand{\bD}{{\mathbf D}}
\newcommand{\bI}{{\mathbf I}}
\newcommand{\cE}{{\mathcal E}}
\newcommand{\cA}{{\mathcal A}}
\newcommand{\cC}{{\mathcal C}}
\newcommand{\cG}{{\mathcal G}}
\newcommand{\cP}{{\mathcal P}}
\newcommand{\UU}{{\mathcal U}}
\newcommand{\II}{{\mathcal I}}
\newcommand{\XX}{{\mathcal X}}
\newcommand{\YY}{{\mathcal Y}}
\newcommand{\LL}{{\mathcal L}}
\newcommand{\PP}{{\mathcal P}}
\newcommand{\pproba}{{\mathbb P}}
\newcommand{\qproba}{{\mathbb Q}}
\newcommand{\cS}{{\mathcal S}}
\newcommand{\bfS}{{\mathbf S}}
\newcommand{\bfP}{{\mathbf P}}
\newcommand{\cmc}{{\bfS}}
\newcommand{\mc}{{\bfP}}
\newcommand{\spec}{{\mathcal S}}
\newcommand{\pspec}{{^p\!\spec}}
\newcommand{\ZZ}{{\cal Z}}
\newcommand{\HH}{{\cal H}}
\newcommand{\OO}{{\cal O}}
\newcommand{\CC}{{\cal C}}
\newcommand{\pcontract}{\PP}
\newcommand{\cset}[1]{\chi_{_{#1}}}
\newcommand{\compset}[1]{\mathcal{M}_{_{#1}}}
\newcommand{\envset}[1]{\mathcal{E}_{_{#1}}}
\newcommand{\envtest}[1]{T_{_{#1}}^{\mbox{\sc\scriptsize e}}}
\newcommand{\comptest}[1]{T_{_{#1}}^{\mbox{\sc\scriptsize m}}}
\newcommand{\observer}[1]{\mathcal{O}_{_{#1}}}
\newcommand{\contractset}[1]{\chi_{#1}}
\newcommand{\contract}{\mathcal{C}}
\newcommand{\ncontract}{\mathcal{D}}
\newcommand{\scontract}{\Gamma}
\newcommand{\sncontract}{\Delta}
\newcommand{\readyset}{\it rs}
\newcommand{\Must}[1]{#1^{\must}}
\newcommand{\May}[1]{#1^{\may}}
\newcommand{\contractmust}{\Must{\contract}}
\newcommand{\contractmay}{\May{\contract}}
\newcommand{\ncontractmust}{\Must{\ncontract}}
\newcommand{\ncontractmay}{\May{\ncontract}}
\newcommand{\Contracts}{\bC}
\newcommand{\nContracts}{\bD}
\newcommand{\Contractsof}[2]{\bC_{#1{\cpara}#2}}
\newcommand{\contracts}{\bC}
\newcommand{\routing}{\rho}
\newcommand{\delay}{\delta}
\newcommand{\safe}{{\rm ok}}
\newcommand{\fault}{{\rm nok}}
\newcommand{\failure}{{\it fail}}
\newcommand{\alphabet}{\Sigma}
\newcommand{\pre}{{\sf pre}}
\newcommand{\preset}[1]{^{\bullet\!}{#1}}
\newcommand{\inal}[1]{#1^{\rm in}}
\newcommand{\outal}[1]{#1^{\rm out}}
\newcommand{\interf}{\II}
\newcommand{\init}{I}
\newcommand{\final}{F}
\newcommand{\concat}{\mbox{\tiny $\;\bullet\;$}}
\newcommand{\view}{\gamma}
\newcommand{\guard}{\gamma}
\newcommand{\aaction}[2]{#1/\left[#2\right]}
\newcommand{\action}{\alpha}
\newcommand{\Acts}{\mathbb{A}}
\newcommand{\pure}{{\rm pure}}
\newcommand{\Interactions}{\mathcal{L}}
\newcommand{\state}{\sigma}
\newcommand{\States}{\mathcal{S}}
\newcommand{\Expressions}{{\it Expr}}
\newcommand{\express}{{\bf E}}
\newcommand{\ExtendedStates}{\overline{\States}}
\newcommand{\triv}{{\bf Triv}}
\newcommand{\present}{{\it present}}
\newcommand{\controlled}{{\bf c}}
\newcommand{\Controlled}{{\bf C}}
\newcommand{\uncontrolled}{{\bf u}}
\newcommand{\Uncontrolled}{{\bf U}}
\newcommand{\neutral}{{\bf n}}
\newcommand{\pc}{{P_{\controlled}}}
\newcommand{\pu}{{P_{\uncontrolled}}}
\newcommand{\pn}{{P_{\neutral}}}
\newcommand{\visloc}{{\bf i}}
\newcommand{\vis}[1]{#1^{\sf vis}}
\newcommand{\loc}[1]{#1^{\it loc}}
\newcommand{\go}{{\sf go}}
\newcommand{\profile}{\pi}
\newcommand{\ports}{P}
\newcommand{\bfports}{{\bf p}}
\newcommand{\variables}{V}
\newcommand{\mayra}{\ra_{\Box}\,}
\newcommand{\mustra}{\ra_{\diamond}\,}

\newcommand{\free}{{\bf f}}
\newcommand{\rec}[2]{{#1}^{{\it rec}}_{{#2}}}

\newcommand{\cont}{{\rm cont}}
\newcommand{\Cont}{\mathcal{C}}
\newcommand{\Vars}{\mathcal{V}}
\newcommand{\Ports}{\mathcal{P}}
\newcommand{\stuttering}{\varepsilon}
\newcommand{\mtrue}{\mbox{\tt T}}
\newcommand{\mfalse}{\mbox{\tt F}}
\newcommand{\true}{\mbox{\textsc{t}}}
\newcommand{\false}{\mbox{\textsc{f}}}
\newcommand{\Lra}{~\Rightarrow~}
\newcommand{\Meu}{{\cal M}}
\newcommand{\Assoc}{{\cal I}}
\newcommand{\Jac}[1]{{\cal J}_{#1}}

\newcommand{\tr}{{\rm Tr}}
\newcommand{\bE}{{\bf E}}
\newcommand{\bA}{{\bf A}}
\newcommand{\bU}{{\bf U}}
\newcommand{\bV}{{\bf V}}
\newcommand{\diag}{{\rm diag}}
\newcommand{\proba}{\pi}
\newcommand{\Proba}{\Pi}
\newcommand{\Probas}{\mathcal{P}}

\newcommand{\hbet}{\!\!\!\!\!}
\newcommand{\beq}{\begin{eqnarray}}
\newcommand{\eeq}{\end{eqnarray}}
\newcommand{\beqq}{\begin{eqnarray*}}
\newcommand{\eeqq}{\end{eqnarray*}}
\newcommand{\bea}{\begin{array}}
\newcommand{\eea}{\end{array}}
\newcommand{\bet}{\begin{tabular}}
\newcommand{\eet}{\end{tabular}}

\newcommand{\eqdef}{\,=_{\rm def}\,}
\newcommand{\EE}{\mathcal{E}}
\newcommand{\DD}{\mathcal{D}}
\newcommand{\NN}{\mathbb{N}}
\newcommand{\cF}{{\cal F}}
\newcommand{\cN}{{\cal N}}
\newcommand{\restrict}[2]{{#2}_{\left\downarrow{#1}\right.}}
\newcommand{\extend}[2]{#2^{\uparrow{#1}}}
\newcommand{\wextend}[2]{#2^{\Uparrow{#1}}}
\newcommand{\sextend}[2]{#2^{\uparrow{#1}}}

\newcommand{\mmbox}[1]{\mbox{ {\normalsize #1} }}
\newcommand{\Tr}[1]{{\bf Tr}_{#1}}
\newcommand{\nn}{p}
\newcommand{\pp}{n}
\newcommand{\RR}{\mathbb{R}}
\newcommand{\cO}{{\cal O}}
\newcommand{\unknown}{\bot}
\newcommand{\known}{\top}
\newcommand{\intime}{{in-time}}
\newcommand{\invalue}{{in-value}}
\newcommand{\Intime}{{In-time}}
\newcommand{\Invalue}{{In-value}}

\newcommand{\todo}[1]{ \{{\bf TO DO:} \textsf{#1}\}}
\newcommand{\intersect}{\cap}
\newcommand{\union}{\cup}
\newcommand{\suchthat}{\ |\ }
\newcommand{\product}{\mathbin{||}}

\newcommand{\richcomp}{{\it RC}}
\newcommand{\rcname}{X}
\newcommand{\setsofcontracts}{\left\{\cgte\right\}}
\newcommand{\optimplementation}{\left[\comp\right]}

\newcommand{\runs}{\mathcal{R}}
\newcommand{\rmR}{{\rm R}}
\newcommand{\requ}[1]{{\rm R_{\mbox{\scriptsize #1}}}}
\newcommand{\reqq}[2]{{\rm R_{\mbox{\scriptsize #1.#2}}}}
\newcommand{\reqs}{\mathcal{C}}
\newcommand{\gtee}{\mathcal{G}}
\newcommand{\assp}{A}
\newcommand{\prom}{G}
\newcommand{\eassp}{H}
\newcommand{\eprom}{R}
\newcommand{\pcgte}{{\mathcal{C}}}
\newcommand{\cgte}{C}
\newcommand{\comp}{M}
\newcommand{\ccomp}{\mathbf{C}}
\newcommand{\env}{E}
\newcommand{\envcgte}{\cgte_\env}
\newcommand{\sys}{S}
\newcommand{\intf}{I}
\newcommand{\complem}[1]{{#1}^c}
\newcommand{\join}{{\,\sqcup\,}}
\newcommand{\meet}{\sqcap}
\newcommand{\bigmeet}{\mbox{\Large $\meet$}}
\newcommand{\implements}{\downarrow}
\newcommand{\environments}{\uparrow}
\newcommand{\maxcomp}[1]{{\comp_{#1}}}
\newcommand{\maximpl}[1]{{\comp_{#1}}}
\newcommand{\maxenv}[1]{{\env^{#1}}}
\newcommand{\maxsys}[1]{{\sys_{#1}}}
\newcommand{\compress}[1]{[#1]}
\newcommand{\compref}{\preceq}
\newcommand{\profref}{\preceq}
\newcommand{\cgref}{\preceq}
\newcommand{\intfref}{\preceq}
\newcommand{\cgequiv}{\sim}
\newcommand{\intfequiv}{\sim}
\newcommand{\sysequiv}{\equiv}
\newcommand{\extension}{\hookrightarrow}
\newcommand{\completion}{\mapsto}
\newcommand{\minr}[1]{{#1^{\reqs}}}
\newcommand{\ming}[1]{{#1^{\gtee}}}
\newcommand{\dual}[1]{{\hat{#1}}}
\newcommand{\component}{implementation}
\newcommand{\Component}{Implementation}
\newcommand{\pprime}{'}
\newcommand{\staying}{staying-in-parking}
\newcommand{\Staying}{Staying-in-parking}
\newcommand{\prefines}{\stackrel{\pi}{\preceq}}
\newcommand{\refines}{\preceq}
\newcommand{\nrefines}{\npreceq}
\newcommand{\unrefines}{\succeq}
\newcommand{\bigtimes}[1]{\mbox{\Large$\times$}_{\!#1}\,}

\newcommand{\nwleq}{\nleq_w}
\newcommand{\internal}[1]{\check{#1}}

\def\IEEEproof{\noindent\textbf{Proof:}}
\def\endIEEEproof{\hfill$\Box$}

\newtheorem{examp}{{\it Example}}
%

\newcommand{\illegal}{\iota}
\newcommand{\net}{{\cal N}}
\newcommand{\ccc}{\mbox{\textsc{abClopruv}}}

\newcommand{\Pred}{\Phi}
\newcommand{\cQ}{\mathcal{Q}}
\newcommand{\cM}{\mathcal{M}}

\newcommand{\covering}[1]{\mbox{cover}\left({#1}\right)}

\newcommand{\acc}{\mbox{Acc}}

\newcommand{\lgeq}[1]{\sim_{#1}}

\newcommand{\FO}{\mbox{FO}}


\RRabstract{
  Interface theories are powerful frameworks supporting incremental
  and compositional design of systems through refinements and
  constructs for conjunction, and parallel composition. In this report
  we present a first Interface Theory---Modal Mixed Interfaces---for
  systems exhibiting both non-determinism and randomness in their
  behaviour.  The associated component model---Mixed Markov Decision
  Processes---is also novel and subsumes both ordinary Markov
  Decision Processes and Probabilistic Automata.
	}
\RRresume{
Les th\'eories d'interfaces sont des formalismes de sp\'ecification. Elles permettent une sp\'ecification incr\'ementale gr\^ace \`a une alg\`ebre riche d'op\'erateurs tels que le raffinement, la conjonction et la composition parall\`ele. Dans ce rapport, on propose une th\'eorie d'interfaces, les Interfaces Modales Mixtes, qui permettent de sp\'ecifier des syst\`emes combinant \'etroitement des aspects probabilistes et non-d\'eterministes. Les Interfaces Modales Mixtes sont construites au-dessus du mod\`ele de composant des Automates Mixtes, qui \'etend à la fois les Processus de D\'ecision Markoviens et les Automates Probabilistes.
}
\RRmotcle{Th\'eories d'interfaces, interfaces probabilistes, syst\`emes probabilistes, syst\`emes non-d\'eterministes}
\RRkeyword{Interface theories, Probabilistic interfaces, Probabilistic systems, Nondeterministic systems}
	\RRprojet{Hycomes}  
\RCRennes
\RRNo{9372}

\makeRR
\tableofcontents
\clearpage

\section{Introduction}

Contract or Interface Theories are powerful frameworks for the
incremental and compositional design of systems. Their essence is to
handle \emph{components} (for capturing actual designs) and
\emph{contracts} or \emph{interfaces} (for capturing specifications).
At their heart sits the notion of {satisfaction}, stating that a
design suitably implements a specification.  To achieve this,
frameworks of components must be equipped with a parallel composition,
and interface theories need a richer algebra to support the
incremental and compositional design of systems, namely: refinement,
conjunction, and parallel
composition~\cite{BauerDHLLNW12,DBLP:journals/fteda/BenvenisteCNPRR18}.
Different styles of frameworks for interfaces include de
Alfaro-Henzinger \emph{Interface
  Au\-to\-ma\-ta}~\cite{DBLP:conf/emsoft/AlfaroH01} Larsen et al.
\emph{Modal Automata}~\cite{AHLNW-08} and their variants, and trace
based \emph{Assume/Guarantee
  contracts}~\cite{DBLP:conf/fmco/BenvenisteCFMPS07}.

There are several cases where the underlying class of systems involves
a mix of nondeterminism and randomness. Faults and their possible
propagation through a system are naturally modelled probabilistically,
whereas lack of knowledge of scheduling principles or incomplete
information must be modelled through by nondeterminism. Since faults
may be affected by scheduling policies, frameworks supporting the
joint handling of nondeterminism and randomness are needed.
In this paper we thus ask the following natural questions:

\begin{itemize}
\item \emph{Question 1}.\; Can we develop a
  framework for components able to blend probabilistic and
  nondeterministic behaviors in a compositional way?

\item \emph{Question 2}. \; Can we develop a theory of interfaces blending probabilistic and
  nondeterministic aspects to serve as
  specifications of the former?
\end{itemize}
Regarding Question 1 about components, Markov Decision Processes
(\mdp) \cite{Puterman} provide a natural framework for capturing
randomness.  Runs of an \mdp\ proceed as follows: from a state $q$
some action $\action$ can be selected, which brings the system into a
probabilistic state $\proba$ (a probability), from which the next
state $q'$ is drawn at random.  \mdp\ compose by synchronizing over
common actions, whereas probabilistic state-choice is made
independently.  As a first contribution of this paper we propose a
non-deterministic extension of \mdp\ called \emph{Mixed Markov
  Decision Processes} (\mmdp).  Runs of \mmdp\ proceed as follows:
from a state $q$ some action $\action$ can be selected, which brings
the system into a \emph{mixed} state $\system$ (a system blending
nondeterminism and probability, albeit with no dynamics, as
illustrated in Figure~\ref{uwedtrwcuytr}-left), from which the next
state $q'$ is drawn in a mixed nondeterministic/probabilistic way.
\mmdp\ compose by synchronizing on their common actions: from
$(q_1,q_2)$, performing action $\action$ brings the composed \mmdp\ to
mixed state $\system_1{\times}\system_2$, from which the next state
$(q'_1,q'_2)$ is drawn.  One major issue is here the definition of
mixed systems and their composition (see Figure~\ref{uwedtrwcuytr}-mid
and right, latter explained in Section~\ref{uwtrdytrd}). Our proposed \mmdp\ component model subsumes that of Probabilistic
Automata \cite{DBLP:conf/concur/LynchSV03}.
\newcommand{\zz}{$\cons_1(\omega_1,q_1)\wedge\cons_2(\omega_2,q_2)$}
\newcommand{\yy}{$(\Omega_1{\times}\Omega_2,\proba_1\otimes\proba_2)$}
\begin{figure}[ht]
\flushright{\scalebox{.9}{\input{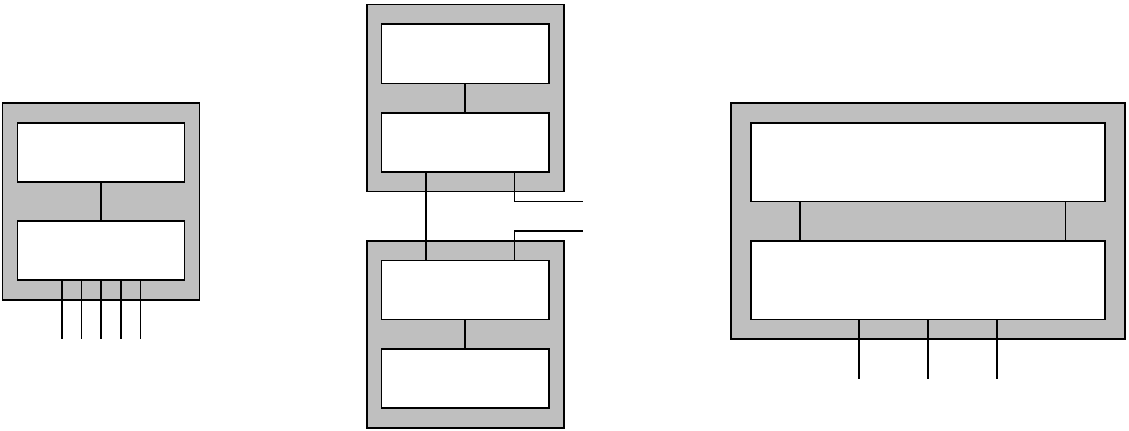_t}}}
\caption{\sf Left: a mixed system; a probability space $(\Omega, \pi)$
  produces the private random outcome $\omega$ subject to the
  constraint $\cons(w,q)$, where $q$ is the visible value taken by a
  tuple of variables $X$. Mid: parallel composition of two mixed
  systems, the intuition: the two systems interact through their
  shared variables (here $x$). Right: the actual formal result as
  another mixed system.}
\label{uwedtrwcuytr}
\end{figure}

Regarding Question 2 about interfaces, there exist few
attempts in this direction, but no complete answer that we are aware
of.  Caillaud et
al.~\cite{DBLP:conf/qest/CaillaudDLLPW10,DBLP:journals/tcs/CaillaudDLLPW11}
propose the framework of Constrained Markov Chains (CMC) as an
extension of Interval Markov Chains \cite{DBLP:conf/lics/JonssonL91}.
By imposing constraints on transition probabilities, CMCs are a
specification theory for discrete time Markov Chains: refinement
relations are proposed, as well as constructs for conjunction and
parallel composition.  The framework is made effective by restricting
constraints on transition probabilities to be polynomial.
\emph{Abstract Probabilistic Automata}
(APA)~\cite{DBLP:conf/vmcai/DelahayeKLLPSW11,DBLP:conf/acsd/DelahayeKLLPSW11,DBLP:conf/qest/DelahayeLLPW11}
is a proposal for an interface theory for Probabilistic Automata.  APA
borrows from the CMC model the idea of setting polynomial constraints
on the transition probabilities attached to the probabilistic states,
and offer the same algebra as CMC does.

The second and major contribution of this paper is the novel framework
of Modal Mixed Interfaces (or \mmi{s} for short), which is an
interface theory for \mmdp. Our approach consists in lifting, to
\mmdp, the construction of Modal Interfaces~\cite{Raclet2011a} on top
of automata.  \mmi{s} offer the usual algebra of interface theories,
namely: satisfaction (also named implementation), refinement,
conjunction, and parallel composition.  We show that \mmi{s} extend CMC regarding satisfaction and refinement, while offering a much cleaner notion of parallel composition.

The paper is organized as follows. In Section~\ref{uwtrdytrd} we
develop the model of Mixed Systems sketched in
Figure~\ref{uwedtrwcuytr}, on top of which \mmdp\ are built in
Section~\ref{roe87wyhfgbsehlrigu} to serve as model of component. In
Section~\ref{45oiuepiruh} we show how to embed in \mmdp\ Segala's
Probabilistic Automata with nondeterministic transition
relations. Section~\ref{gw49587gtroi} introduces Mixed Modal
Interfaces as a specification framework for \mmdp\ and we show in
Section~\ref{rtuiothoguih} how to embed in it the specification
framework of Constraint Markov Chains. All proofs are deferred to
appendices.

\section{Mixed Probabilistic Nondeterministic systems}
\label{uwtrdytrd}

For $(\Omega,\proba)$ a finite or countable probability space,
$\proba$ is entirely determined by its associated \emph{weighting
  function} $w(\omega)\eqdef\proba(\{\omega\})$. By abuse of notation,
we denote by $\proba(\omega)$ the weighting function associated to
$\proba$.  Also, for a subset $W\subseteq\Omega$ such that
$\proba(W)>0$, we define the \emph{conditional probability}
$\proba(.\mid{W})$ by the formula
$\proba(V{\mid}{W})\eqdef\frac{\proba(V\cap{W})}{\proba(W)}$, which is
well defined since $\proba(W){>}0$. The \emph{support} of $\proba$,
denoted by $\supp(\proba)$, is the set of all $\omega$ such that
$\proba(\omega)>0$. Throughout this paper and unless otherwise
specified we consider only finite or countable probability
spaces.\footnote{\label{eriuyoiu} 
The restriction that $\Omega$ is at
  most countable is technically important in the above material. For
  the general case, we must abandon conditional probabilities and use
  the notion of \emph{conditional expectation,} which is defined in
  full generality. Conditional distributions require additional
  topological assumptions for their definition, and so does the notion
  of \emph{support.}}

We are now ready to define Mixed Probabilistic Nondeterministic
systems and give their semantics. This was illustrated in
Figure\,\ref{uwedtrwcuytr}.
\begin{definition}
  \label{slergiuhpiu} \it A \emph{Mixed Nondeterministic Probabilistic
    system} or \emph{Mixed System} for short is a tuple:
   $\system=((\Omega,\proba),X,\cons)$,
  where $(\Omega,\proba)$ is a probability space; $X$ is a finite set
  of variables having finite or countable domain
  $\bea{c}Q=\prod_{x\in{X}}Q_x\eea$; and
  $\cons\subseteq{\Omega\times{Q}}$ is a relation.

  A system $\system$ is called \emph{inconsistent} if
  $\proba(\exists{q}.C)=0$, otherwise it is said {consistent}.
  If $\system$ is consistent, its \emph{operational semantics}
  consists in: 
  \begin{compactenum}
  \item drawing $\omega\in\Omega$ at random according to
    $\proba(.\mid\exists{q}.C)$, and
  \item nondeterministically selecting $q\in{Q}$ such that
    $\omega\,\cons\,{q}$.
  \end{compactenum}
	This two-step procedure is denoted by
  $\produces{\system}{q}$
  and, for $\Systems$ a set of mixed systems, we write
  $\produces{\Systems}{q}$
  if $\produces{\system}{q}$ holds for some $\system\in\Systems$.
\end{definition}
In the sequel and unless otherwise specified, we only consider
consistent systems.

\begin{figure}[ht]
  \begin{center}
    \includegraphics[scale=0.3]{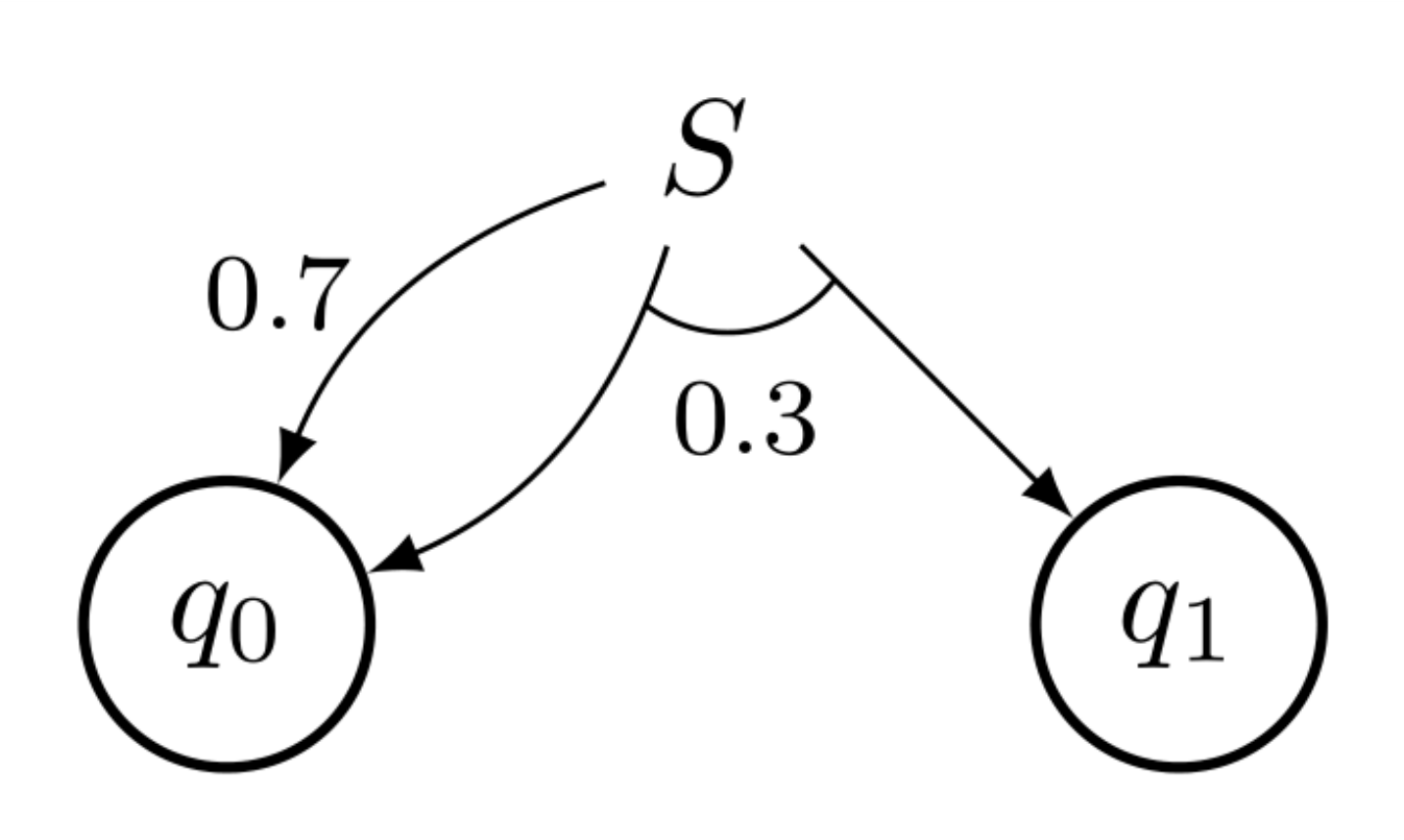}
    \end{center}
\caption{\sf Example~\ref{kghirktglrull}: a mixed system}\label{fig:ex-ms}
\end{figure}
%
\begin{example}\rm \label{kghirktglrull}
  The following mixed system $\system$ is depicted in
  Figure~\ref{fig:ex-ms}:
  \begin{compactitem}
  \item $\Omega = \{ \omega_1, \omega_2 \}$ with
    $\proba(\omega_1) = 0.7$ and $\proba(\omega_2) = 0.3$;
  \item $X = \{ x \}$ over $Q_x = \{0,1\}$. Call $q_i$
    for $i \in Q_x$ the state for which $x = i$;
  \item
    $\cons = \{ (\omega_1, q_0), (\omega_2, q_0), (\omega_2, q_1) \}$.
  \end{compactitem}
  Intuitively, $\system$ may evolve to $q_0$ with probability
  $0.7$. It may also nondeterministically evolve to $q_0$ or $q_1$
  with probability $0.3$.\eproof
\end{example}
\begin{example}\rm 
	\label{lerigtu} In a mixed system, the randoms are hidden that is, only their effect
on the visible system variables is of interest. Now suppose that
$\Omega=\Omega_1\times\Omega_2$ and the constraint $\cons$ has the
form $\cons(\omega_1,q)$. In this case, $\Omega_2$ is not needed and
can be removed, e.g., by replacing $(\Omega,\proba)$ by its marginal
$(\Omega_1,\proba_1)$, where
$\proba_1(\omega_1)=\sum_{\omega_2}\proba(\omega_1,\omega_2)$.\eproof 
\end{example}
This was just a simple case and we now discuss the operation of
compression, on top of which a notion of equivalence between systems
can be defined.  This material is borrowed
from~\cite{BenvenisteLFG95}.

\begin{definition}[compression]
  \label{lighlalegfr}
  For $\system=((\Omega,\proba),X,\cons)$ a Mixed System, we define
  the following equivalence relation on $\Omega$: \beq \omega \sim
  \omega'&\emph{iff}& \forall{q}:(\omega,q)\in\cons \Leftrightarrow
  (\omega',q)\in\cons
  \label{eoguheogihio}
  \eeq The \emph{compression} of $\system$, denoted by
$\compress{\system}=((\compress{\Omega},\compress{\proba}),X,\compress{\cons})$,
is defined as follows: $\compress{\Omega} = \Omega{/}{\sim}$ (its
elements are written $\compress{\omega}$),
$\compress{\cons}(\compress{\omega},\cdot) = {\cons}({\omega},\cdot)$
for $\omega\in\compress{\omega}$ and
$\compress{\proba}(\compress{\omega})=\sum_{\omega\in\compress{\omega}}\proba(\omega)$.
Say that $\system$ is \emph{compressed} if it coincides with its compression.
\end{definition}
Distinguishing $\omega$ and $\omega'$ is impossible if
$\omega\sim\omega'$. Compressing $\Omega$ is thus natural.
We say that two systems are equivalent if their compressed forms are
isomorphic.

\begin{definition}[equivalence] \it
  \label{oeruihytersd}
  Two compressed mixed system $\system$ and $\system'$ are called
  \emph{equivalent}, written $\system\equiv\system'$, if they possess
  identical sets of variables $X=X'$ and isomorphic operational
  semantics, i.e., if, when setting
  $\cons_\proba = \left\{ (\omega,q){\in}\cons \mid \proba(\omega){>}0
  \right\}$,
  there exists a bijective map:
  $ \varphi:\cons_\proba\mapsto\cons'_{\proba'}$
  such that, for every $(\omega,q)\in\cons_{\proba}$, we have
  $\proba(\omega)=\proba'(\omega')$ and $q=q'$, where
  $(\omega',q')\eqdef\varphi(\omega,q)$.
  Say that arbitrary systems $\system$ and $\system'$ are
  \emph{equivalent} if their compressions are equivalent.
\end{definition}

Mixed Systems are equipped with a parallel composition by intersection
in which probabilistic choices remain local and independent,
conditionally to the satisfaction of synchronization constraints.

\begin{definition}[parallel composition]
	\label{lrtguiodtrhleguip}
	For $\system_i,i=1,2$ two mixed systems, we define their
        \emph{parallel composition} $\system=\system_1\times\system_2$
        as the following Mixed System:
\vspace{-0.1cm}
        \beqq\bea{rcl}
        {X}&=&{X_1}\cup{X_2} \;,\;
	\Omega=\Omega_1\times\Omega_2 ~\emph{, and }~\proba=\proba_1\otimes\proba_2 \\ [1mm]
        \cons&=&\left\{ \left(\omega,q\right)\;\left|\;
            \omega_1\,\cons_1\,\proj{1}{{q}} \wedge
            \omega_2\,\cons_2\,\proj{2}{{q}} \right.\right\} \eea\eeqq
        \vspace{-0.1cm}
        where $\proj{i}{{q}}$ denotes the projection of the state
        ${q}$ over the variables ${X_i}$.
\end{definition}
The definition of $\cons$ expresses that the two systems must agree on
their shared variables ${X_1}\cap{X_2}$.
For the next definition, $\proj{12}{.}$ denotes the projection over
the shared variables \mbox{${X_1}\cap{X_2}$}.  We write
$q_1\compat{q_2}$ and say that $q_1$ and $q_2$ are \emph{compatible}
if $\proj{12}{{q_1}}=\proj{12}{{q_2}}$. If $q_1\compat{q_2}$, we
define the \emph{join} $q_1{\join}q_2$ as the unique $q$ projecting
over $q_1$ and $q_2$. Using this notation, $\cons$ in
Definition~\ref{lrtguiodtrhleguip} rewrites
\begin{equation}
  \cons=\left\{
    (\omega,q_1\join{q_2})
    \mid
    q_1\compat{q_2} \,\wedge\, \omega_1\cons_1{q_1} \,\wedge\, \omega_2\cons_2{q_2}
  \right\}
\label{usdqwcdur}
\end{equation}
Observe that the composition of two consistent systems may be
inconsistent.
\begin{lemma}
  \label{wjdetyfuuy}
  For mixed systems, equivalence is a congruence, i.e.,
  $\system_i\equiv\system'_i$ for $i=1,2$ implies
  $\system_1\times\system_2\equiv\system'_1\times\system'_2$.
\end{lemma}
\begin{proof}
	 See Appendix~\ref{riuygfttyiohjih}.\eproof
\end{proof}

Let $\Systems(X)$ denote the collection of all mixed systems having $X$ as
set of variables.
\begin{definition}[lifting relations]
  \label{hrgfuihsk} Relation
  {$\NMPlift{\simu}\subseteq\Systems(X_1)\times\Systems(X_2)$} is called the \emph{lifting} of relation $\simu\subseteq{Q_1}{\times}{Q_2}$ if
  there exists a \emph{weighting function} 
	\linebreak
  \mbox{$w:\Omega_1{\times}\Omega_2\ra[0,1]$} such that: 
\begin{enumerate}
\item \label{sggouigh} For every triple $(\omega_1,\omega_2;q_1)$ such
  that $w(\omega_1,\omega_2)>0$ and $\omega_1\,\cons_1\,{q_1}$, there
  exists $q_2$ such that $\omega_2\,\cons_2\,{q_2}$, and
  $q_1\,\simu\,{q_2}$;
\item \label{leiurlyui}
  $\sum_{\omega_2}w(\omega_1,\omega_2)=\proba_1(\omega_1)$ and
  \ $\sum_{\omega_1}w(\omega_1,\omega_2)=\proba_2(\omega_2)$.
\end{enumerate}
\end{definition}
Note the existential quantifier in Condition~\ref{sggouigh}.  By
Condition~\ref{leiurlyui}, $w$ induces a probability on
$\Omega_1\times\Omega_2$. We write $\system_1\NMPlift{\simu}\system_2$
to mean $(S_1,S_2)\in\NMPlift{\simu}$.
\begin{example}\label{ex:lift} \rm
  Consider the mixed systems $\system_1$ and $\system_2$ depicted in
  Figure~\ref{fig:lift}.
  We can lift the relation $\simu$ such that
  $\simu = \{ (q_{10}, q_{20}), (q_{11}, q_{20}), (q_{11}, q_{21}) \}$
  and see that $\system_1\NMPlift{\simu}\system_2$ by considering the
  weighting function shown in red.
  However, the relation $\simu'$ such that
  $\simu' = \{ (q_{10}, q_{20}), (q_{11}, q_{21}) \}$ cannot be lift
  as a witness $w$ does not exist.\eproof
\end{example}

\begin{figure}[ht]
  \begin{center}
      \includegraphics[scale=0.3]{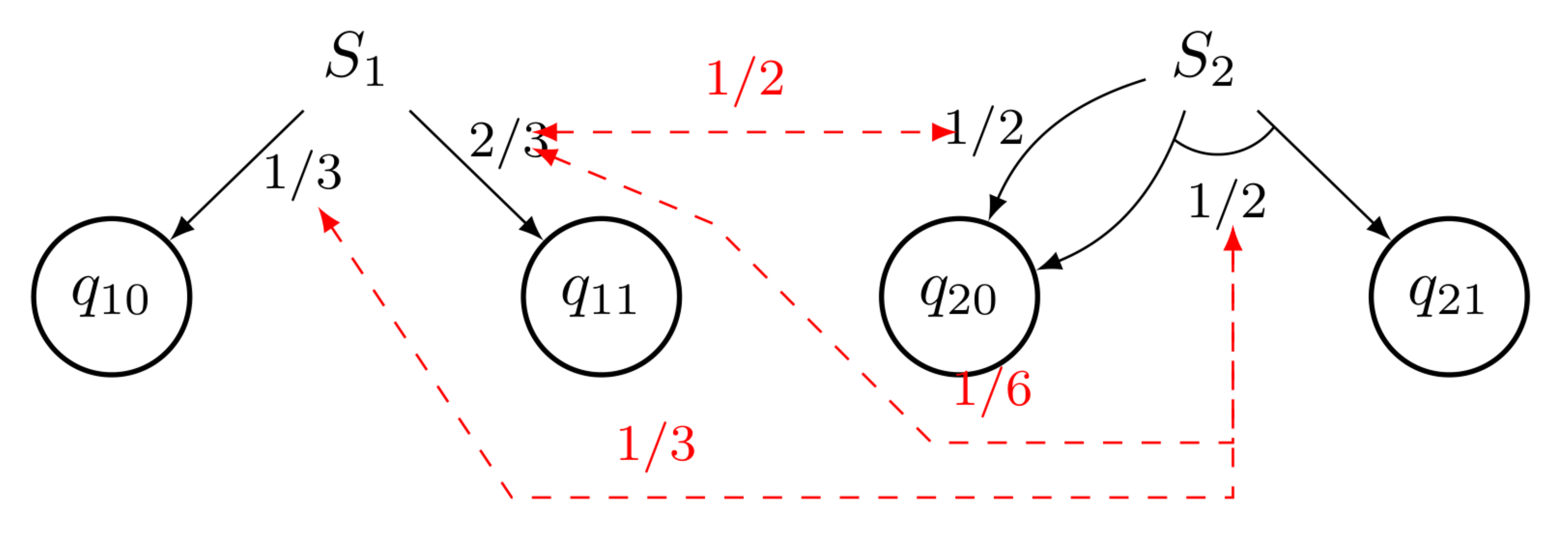}
  \end{center}
\caption{\sf Example~\ref{ex:lift}: lifted relation}\label{fig:lift}
\end{figure}

\begin{lemma}
  \label{egfuioehrpo} $\system_1\NMPlift{\simu}\system_2$ and
  $\system'_1{\equiv}\system_1$ together imply
  $\system'_1\NMPlift{\simu}\system_2$.
\end{lemma}
\begin{proof}
See Appendix~\ref{elrgfuilyu}.\eproof 
\end{proof}
Lemma~\ref{egfuioehrpo} expresses that mixed system equivalence is also a
congruence with respect to the lifting of relations. 
\begin{definition} \label{gothiioaweukdfyg} Given
  $\simu \subseteq Q_1 \times Q_2$ and
  $\Systems_i\subseteq\Systems(Q_i)$ for $i = 1,2$, define
$$\Systems_1\,\subsimu{\simu}\,\Systems_2 \emph{ iff }\forall\system_1{\in}\Systems_1, \exists\system_2{\in}\Systems_2: \system_1\NMPlift{\simu}\system_2\,,$$   
and define $\Systems_1\,\supsimu{\simu}\,\Systems_2$ as being
$\Systems_2\,\subsimu{\widetilde{\simu}}\,\Systems_1$, where
${\widetilde{\simu}}$ denotes the transpose of ${{\simu}}$. Write
$\system\,\in^{\simu}\,\Systems$ to mean
$\{\system\}\,\subsimu{\simu}\,\Systems$ and
$\Systems\,\ni^{\simu}\,\system$ to mean
$\Systems\,\supsimu{\simu}\,\{\system\}$.

\end{definition}
For $Q_1$, $Q_2$, and $Q_3$ three finite or countable sets, and
\mbox{$\simu_{12}\subseteq{Q_1}{\times}{Q_2}$} and
\mbox{$\simu_{23}\subseteq{Q_2}{\times}{Q_3}$} two relations, define:
  \vspace{-0.3cm}
\beq
\simu_{12}\reldot\simu_{23} &\eqdef& \proj{Q_1{\times}Q_3}{\simu_{12}{\wedge}\simu_{23}}
\label{rbgjgnwoun}
\eeq
 that is,
$\simu_{12}{\reldot}\simu_{23} \subseteq Q_1 {\times} Q_3$ and
$q_1(\simu_{12}{\reldot}\simu_{23})q_3$ iff $q_1\,\simu_{12}\,q_2$ and
$q_2\,\simu_{23}\,q_3$ for some $q_2 {\in} Q_2$.
\begin{lemma} \label {uweygfkiutygf}
We have $\NMPlift{(\simu_{12}\reldot\simu_{23})} = \NMPlift{\simu_{12}}\reldot\NMPlift{\simu_{23}}$ ~and~ $
\subsimu{\simu_{12}\reldot\simu_{23}} = \subsimu{\simu_{12}}\reldot\subsimu{\simu_{23}}$.
\end{lemma}
\begin{proof}
	See Appendix~\ref{guihepioru}.
\end{proof}
\paragraph{The set algebra of mixed systems.}
\label{egfuehlgui}
We have introduced in Definition~\ref{oeruihytersd} the notion of
equivalence $\equiv$ for mixed systems. Lemmas~\ref{wjdetyfuuy}
and~\ref{egfuioehrpo} show that this equivalence is a congruence with
respect to both mixed systems composition and the lifting of relations
from state spaces to mixed systems.  We now define the set algebra
induced by this equivalence. For $\Systems$ and $\Systems'$ two sets
of mixed systems:
\beq\bea{rcl} \Systems \subseteq \Systems'
&\mbox{ iff }&
\forall\system{\in}\Systems,\exists\system'{\in}\Systems':
\system'\equiv\system
\\ [1mm]
\Systems = \Systems' &\mbox{ iff }& \Systems \subseteq \Systems' ~\mbox{ and }~
\Systems' \subseteq \Systems
\\ [1mm]
\Systems_1 \cap \Systems_2 &=& \bigcup\left\{\Systems\mid\Systems\subseteq\Systems_1 \mbox{ and } \Systems\subseteq\Systems_2\right\}
\eea
\label{egieofuiohi}
\eeq Thanks to this redefinition, we shall freely use the usual set
theoretic notations for sets of mixed systems.

\section{{Mixed Markov Decision Processes}}
\label {roe87wyhfgbsehlrigu}

Probabilistic automata have been introduced in~\cite{SegalaL94} for
the study of randomization in concurrency theory. They are labeled
transitions systems where transitions are from states not to a single
target state but to a target state determined by a probability
measure. \emph{Markov Decision Processes}~\cite{BaierK00,Derman70}
exist in mathematics for quite some time. They correspond to
\emph{deterministic} probabilistic automata in the following sense:
from each state, each action identifies a unique probability
measure. In this paper we consider extensions of MDP in which the
target of a transition is a mixed probabilistic/nondeterministic
system as defined in Section~\ref{uwtrdytrd}:

\begin{definition}[\mmdp] \label{def-iohmdp} A \emph{Mixed Markov
    Decision Process} (\mmdp) is a tuple
  $M=({\alphabet},{X},r_0,\ra)$, where:
\begin{itemize}
\item ${\alphabet}$ is a finite alphabet of \emph{actions};
\item $X$ is a finite set of variables having finite or countable
  domain $R{=}\prod_{x\in{X}}R_x$, and $r_0 \in R$ is the
  \emph{initial} state;
\item $\ra\;\subseteq\;R \times \alphabet \times \Systems(X)$ is the
  \emph{transition relation}; we write~
  $\trans{r}{\action}{\system}{}$ $($or~
  $\trans{r}{\action}{\system}{M})$ to mean
  $(r,\action,\system) \in \ra$, and $\trans{r}{\action}{}{}$ if
  $\trans{r}{\action}{\system}{}$.
\end{itemize}
Let $\Systems(X)$ be the set of all mixed systems $\system$ over $X$,
possibly inconsistent.
We require that $M$ shall be \emph{deterministic}: for any pair
$(r,\action)\in{R}\times\alphabet$, $\trans{r}{\action}{\system}{}$
and $\trans{r}{\action}{\system'}{}$ implies $\system = \system'$. $M$
is said to be \emph{live} if all its transitions target consistent
systems.
\end{definition}
A \emph{run} $\trace$ of $M$ is a finite or infinite sequence of states $r_0,r_1,r_2,\dots$ starting from initial state $r_0$ and then
progressing by a sequence of steps of the form
$\fullprobatrans{r_k}{\action}{\system}{}{r_{k+1}} \,,
$
where $\produces{\system}{r'}$ is the operational semantics of system $\system$ following Definition~\ref{slergiuhpiu}.
%
 \begin{figure}[ht]
	\begin{center}
  \includegraphics[scale=0.3]{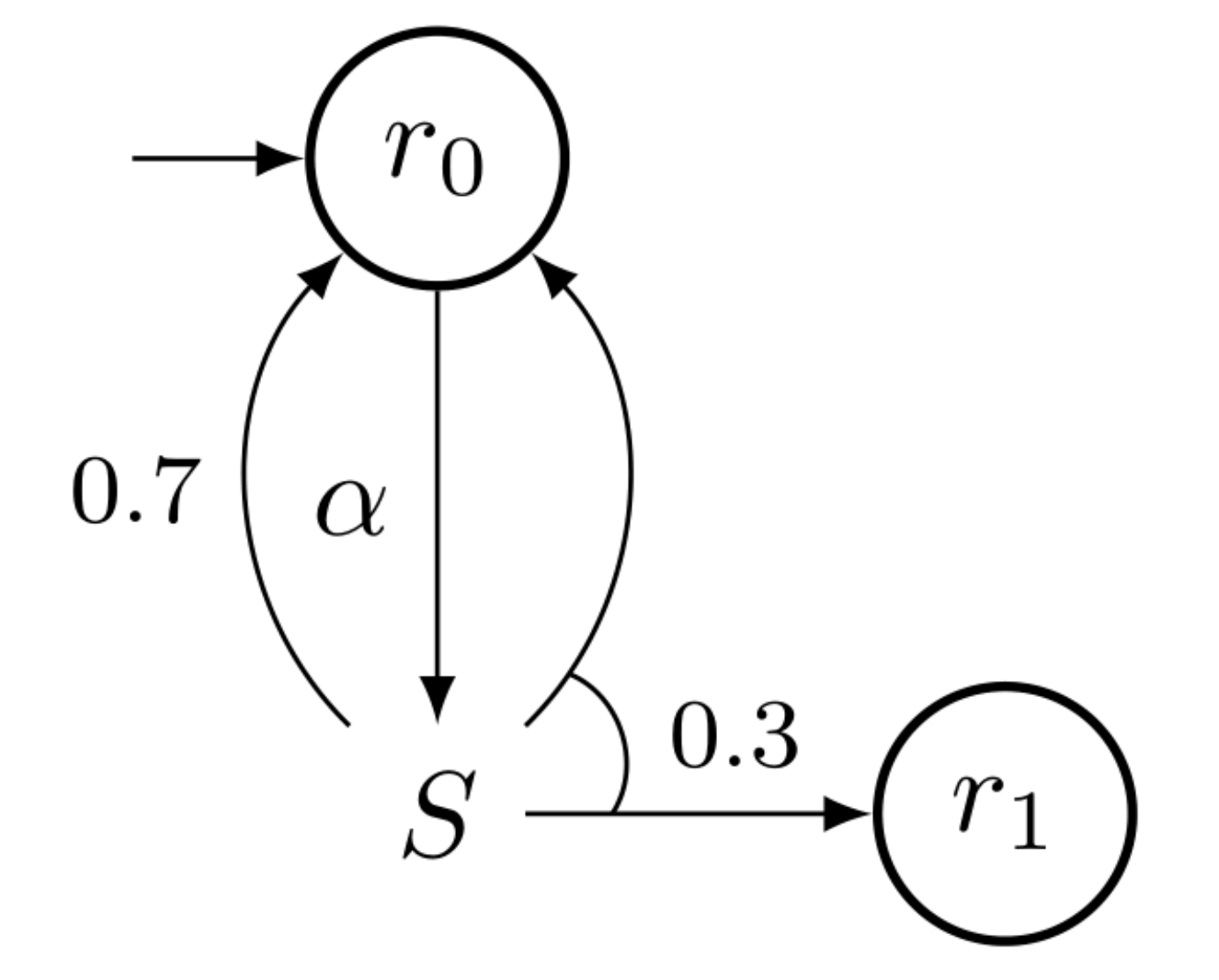}
  \caption{\sf Example of an \mmdp\ ($S$ is the mixed system of
   Figure~\ref{fig:ex-ms})}\label{fig:ex-MMDP}
 \end{center}
\end{figure}
%
 \begin{example}\rm
   An \mmdp\ is depicted in Figure~\ref{fig:ex-MMDP} with
   $\alphabet = \{ \action \}$; it has only one transition
   $(r_0, \action, S)$ where $S$ is the mixed system of
   Figure~\ref{fig:ex-ms}.\eproof
 \end{example}
\begin{definition}[simulation]\label{def:simulation}
\it Given two \mmdp\
  $M_1$ and $M_2$ over $\alphabet$, $M_2$ \emph{simulates} $M_1$,
  written $M_1 \leq M_2$, if there exists a relation $\leq \
  \subseteq \, R_1 \times R_2$ such that:
\begin{compactitem}
\item $r_{0,1} \leq r_{0,2}$\ and,
\item for $r_1 \leq r_2$ and for each transition
$\transindex{r_1}{\action}{\system_1}{M_1}$,
  there exists a transition
  $\transindex{r_2}{\action}{\system_2}{M_2}$
  such that
  $\system_1~\NMPlift{\leq}~\system_2$.
\end{compactitem}
$M_1$ and $M_2$ are called \emph{simulation equivalent} if they simulate each other.
\end{definition}
The composition of two \mmdp\ having identical alphabets is introduced next. A transition labeled $\action$ is available in the
product if and only if the components are ready to do simultaneously a
transition labeled $\action$.
\begin{definition}[composition]\label{rtghltughtui} \it
  For $M_1$ and $M_2$ two \mmdp\ having identical alphabet $\alphabet$
  and compatible initial states $r_{0,1}{\compat}r_{0,2}$, their
  \emph{composition} $M_1\mpara{M_2}$ has alphabet $\alphabet$, set of
  variables $X_1{\cup}X_2$, and initial state
  $r_{0,1}{\join}r_{0,2}$. Its transition relation is the minimal
  relation satisfying:
  \beqq
\trans{r_i}{\action}{\system_i}{M_i} \emph{ for }i=1,2  \emph{ and }
r_1\compat{r_2}
&\implies& \trans{r_1\join{r_2}}{\action}{\system_1 {\times} \system_2}{M}
\eeqq
where $\system_1 {\times} \system_2$ has been defined in Definition~$\ref{lrtguiodtrhleguip}$.
\end{definition}
Parallel composition preserves simulation:
\begin{lemma}
	\label{glrtukghtrllsdukfg}
	Let $M_i,i=1,2$ be two \mmdp\ and let $M'_i\leq{M_i},i=1,2$. Then, we have $M'_1\times{M'_2}\leq{M_1}\times{M_2}$.
\end{lemma}
\begin{proof}
	See Appendix~\ref{wroifuwgopi}.\eproof
\end{proof}

\section{Link to Probabilistic Automata}
\label{45oiuepiruh}

Probabilistic Automata (\pa)~\cite{DBLP:conf/concur/LynchSV03} are a
nondeterministic extension of \mdp{s}. We show here that \mmdp\ can
capture this nondeterminism by making use of the nondeterminism
involved in mixed systems. We discuss here the version of \pa\ with no
consideration of internal actions.
\begin{definition}
  A Probabilistic Automaton \pa\ is a tuple $P=(\alphabet,Q,q_0,\ra)$,
  where $\alphabet$ is the finite alphabet of actions, $Q$ is a finite
  state space, $q_0{\in}Q$ is the initial state, and
  $\ra\,\subseteq\,{Q}{\times}\alphabet{\times}\Probas(Q)$ is a
  probabilistic transition relation where $\Probas(Q)$ is the set of
  all probability distributions over $Q$.
\end{definition}
{The operational semantics of $P$ is as follows:} if $P$ is in state
$q{\in}Q$, performing $\action{\in}\alphabet$ leads to some target set
of probability distributions over $Q$, of which one is selected,
nondeterministically, and then used to draw the next state $q'$.
We can reinterpret this operational semantics as follows: performing
$\action{\in}\alphabet$ while being in state $q{\in}Q$ leads to the
same target set of probability distributions over $Q$, that we use
differently. We form the direct product of all distributions belonging
to the target set and we perform one trial according to this
distribution, i.e., we perform independent random trials for all
probabilities belonging to the target set. This yields a tuple of
candidate values for the next state, of which we select one,
nondeterministically.

Clearly, these two operational semantics produce identical
outcomes. Now, the latter is the operational semantics of the
\mmdp\ \mbox{$M_P=(\alphabet,\xi,q_0,\ra_P)$}, defined as follows:
$\alphabet$ is as before, $\xi$ is the system variable with domain
$Q$, $q_0$ is as before, and $\ra_P$ is the transition relation
defined as follows: $\ra_P$ maps a pair
$(q,\action)\in{Q}{\times}\alphabet$ to the mixed system
\mbox{$S=((\Omega,\Proba),\xi,\cons)$}
defined as follows. Let $n$ be the cardinality of the set
  $\{\proba\mid(q,\action,\proba)\in\ra\}$. Take for $\Omega$ the product of
  $n$ copies of $Q$, so that $\omega$ is an $n$-tuple of states:
  $\omega=(q_1,\dots,q_n)$. Take for $\Proba$ the product of all probabilities belonging to set $\{\proba\mid(q,\action,\proba)\in\ra\}$. Finally,
	 $(\omega,q)\in\cons$ if and only if $q\in\{q_1,\dots,q_n\}$.
The following theorem holds, for which the definitions of simulation
and composition of \pa\ are available
in~\cite{DBLP:conf/concur/LynchSV03}:
\begin{theorem}
  \label{erlgfuierhlpiu}
  Let $P_1, P_2$ be two \pa\ and $M_{P_1}, M_{P_2}$ be the
  corresponding \mmdp. The mapping $P\ra{M_P}$ preserves both simulation and product: $P_1\leq{P_2}$ if and only
  if $M_{P_1}\leq M_{P_2}$, and $M_{P_1\times{P_2}}$ and
  $M_{P_1}\times M_{P_2}$ are simulation equivalent.
\end{theorem}
A reverse mapping also exists. The \pa\ associated to the \mmdp\ of  Fig.\,\ref{fig:ex-MMDP} is easily guessed: performing $\action$ leads to the family of two probability spaces over $R$: $(R,\proba_1)$ where $\proba_1(r_0)=1$ and $(R,\proba_2)$ where $\proba_2(r_0)=0.7$ and $\proba_2(r_1)=0.3$. Theorem~\ref{erlgfuierhlpiu} holds for this inverse mapping as well.
So, what is the point in preferring \mmdp? The rich algebra developed in Section~\ref{uwtrdytrd} (with the two key notions of compression and lifting) is essential in supporting a flexible notion of parallel composition. In particular, when extending \pa\ with labeling using sets of atomic propositions (AP), it is required, for the parallel composition to be defined, that the two sets are disjoint. Our \mmdp\ offer the expressive power of AP-labeling without setting any restriction on the parallel composition. See Section~\ref{rtuiothoguih} for a detailed study of the same issue, for Constraint Markov Chains.

\section{{Modal Mixed Interfaces}}
  \label {gw49587gtroi}

  In this section we develop the first part of our agenda, namely a
  framework of \emph{Modal Mixed Interfaces} (or \emph{\mmi{s}} for
  short) which allow to specify sets of \mmdp\ called the
  \emph{models} of the interface. Note that in this section sets of
  probabilities associated to \mmi{s} are manipulated by not paying
  attention to effectiveness. \mmi{s} extend to a mixed
  probabilistic-nondeterministic setting the formalism of Modal
  Specifications~\cite{KT88,Larsen89,AHLNW-08}.  In this paper we
  develop our framework for the case of a fixed alphabet $\alphabet$
  of actions. Following~\cite{Raclet2011a}, alphabet extension
  techniques allow to handle the general case.

\paragraph{Definition and Semantics.}

  For $X$ a finite set of variables, $\Systems(X)$ denotes the class
  of all mixed systems $\system$ over $X$ and we call \emph{mixed
    state} a subset $\Systems\subseteq\Systems(X)$.
  \begin{definition} \label{def-empi}  A \emph{\mmi} is
    defined as a tuple \mbox{$\contract=({\alphabet},X,q_0,\ramust,\ramay)$},
    where:
\begin{compactitem}
\item ${\alphabet}$ is the finite alphabet of actions;
\item $X$ is a finite set of variables having finite domain $Q\eqdef\prod_{x\in{X}}Q_x$:
\item $q_0$  is the
  \emph{initial state} (we do \emph{not} require that $q_0 \in Q$);
\item
\mbox{$\ramust,\ramay\;\subseteq\;Q\times\alphabet\times{2^{\Systems(X)}}$}  are the \emph{must} and \emph{may} transition relations.
\end{compactitem}
We require that $\contract$ is \emph{deterministic} in the
following sense: for any pair $(q,\action)\in{Q}\times\alphabet$,
$({q},{\action},{\Systems})\in\ramust$ and
$({q},{\action},{\Systems'})\in\ramust$ imply
${{\Systems}}={{\Systems'}}$, and similarly for $\ramay$.
\end{definition}
We write
  $\transmust{q}{\action}{\Systems}$ to mean
  $(q,\action,\Systems)\in\ramust$; $\transmay{q}{\action}{\Systems}$ is defined similarly.
We write $\ntransmust{q}{\action}{}$ if there exists no mixed state $\Systems$ such that  $\transmust{q}{\action}{\Systems}$;
$\ntransmay{q}{\action}{}$ is defined similarly. Finally, we write $\probatrans{\Systems}{}{q'}$ to mean that $\probatrans{\system}{}{q'}$ holds for some $\system\in{\Systems}$.
Note that $q_0\not\in{Q}$ will typically arise when the subset ${Q}$
of states is empty; it will be useful to model \emph{unsatisfiable}
interfaces.
Whenever convenient, we shall write $\Systems^\probamust$ and $\Systems^\probamay$ when referring to mixed states targeted by \must\ and \may\ transitions, respectively.
\begin{example}\rm \label{rleuglkkkjh}
  The following \mmi\ is depicted in Figure~\ref{fig:ex-ref}-right:
  \begin{compactitem}
  \item $\alphabet = \{ \action \}$;
  \item $X = \{ x \}$ over $Q_x = \{ 0, 1 \}$ with $x = 0$ in
    $q_{0,1}$ and $x = 1$ in $q_{0,1}$;
  \item $\transmay{q_{0,2}}{\action}{\{ S_1, S_2\}}$ with $S_1$ and
    $S_2$ two mixed systems.
  \end{compactitem}
  Note that in the \mmi\ of Figure~\ref{fig:ex-ref}-left, we have
  $\transmay{q_{0,1}}{\action}{\{ S\}}$ and
  $\transmust{q_{0,1}}{\action}{\{ S\}}$ but only the plain arrow
  corresponding to the must transition is depicted in order to lighten
  the figure.\eproof
\end{example}
The intuitive semantics is the following: a \emph{must} transition
labeled by $\action$ must be available in any model with an associated
system $\system$ selected from $\Systems^\probamust$ and then a next
state $q'$ is selected according to the operational semantics of
$\system$. The same holds for a \emph{may} transition except that in
this case, the occurrence of the action is allowed but not required and
the selected system belongs to $\Systems^\probamay$.

We now formally define the notion of \emph{model} of a \mmi\ over
$\alphabet$ in terms of \mmdp\ over the same alphabet; we make use of
Definition~\ref{slergiuhpiu} for the notion of \emph{consistent
  system}, Definition~{\ref{gothiioaweukdfyg}} for the meaning of
$\in^{\models}$ and Definition~{\ref{def-iohmdp}} for \emph{live}
\mmdp:
\begin{definition}[satisfaction] \label{def-model} For $\contract$
  a \mmi\ such that $q_0\in{Q}$ and $M$ a live \mmdp, a relation
  \mbox{$\models~\subseteq~R \times {Q}$} is a \emph{satisfaction
    relation} iff, for any $(r,q)$ such that $r \models q$, the
  following holds:
%
\beq
\left.
\bea{l}\emph{only \emph{may} transitions} \\ \emph{of $\contract$ are allowed
  for $M$}\eea
	\right\}
\forall\action:
\transindex{r}{\action}{\system_M}{M}
\Ra\left[
\transmayindex{q}{\action}{\Systems^\probamay}{\contract} \emph{ and }  {\system_M}\in^{\models}{\Systems^\probamay}\right]
\label{eq-modele-may}
\\
\left.\bea{l}\emph{\emph{must} transitions of $\contract$} \\ \emph{are mandatory
  for $M$}\eea\right\}
\forall\action:
\transmustindex{q}{\action}{\Systems^\probamust}{\contract}
\Ra
\left[\transindex{r}{\action}{\system_M}{M} \emph{ and }  {\system_M}\in^{\models}{\Systems^\probamust}\right]
\label{eq-modele-must}
\eeq
$M$ is a model of $\contract$, written $M \models \contract$, if
$r_0 \models q_0$. A \mmi\ $\contract$ such that $q_0\notin{Q}$ does
not admit any model.
\end{definition}
The set of models of a Mixed Interface is closed under the simulation
equivalence of Definition~\ref{def:simulation}. Observe moreover that
the condition (\ref{eq-modele-must}) makes only sense because we
consider deterministic interfaces, since the system $\system_M$
reached by performing action $\action$ is unique in this case.

Note that, by definition, $r \models q$ induces constraints on the set
of systems associated to the \must\ and \may\ transitions stemming
from $q$. More precisely, for any $\action$ and $\Systems^\probamay$
and $\Systems^\probamust$ as in (\ref{eq-modele-may}) and
(\ref{eq-modele-must}), the intersection
$\Systems^\probamust\cap\Systems^\probamay$ necessarily contains at
least one consistent system. In this statement and in the sequel, we
stress that the set algebra over sets of Mixed Systems is the one
defined in (\ref{egieofuiohi}).
\begin{definition}
  \label{rtguiohkhtfiyt} A state $q$ is called \emph{inconsistent} if
  $\transmustindex{q}{\action}{\Systems^\probamust}{\contract}$, and
  either \mbox{$\ntransmayindex{q}{\action}{}{\contract}$}, or
  $\transmayindex{q}{\action}{\Systems^\probamay}{\contract}$ but the
  intersection $\Systems^\probamust\cap\Systems^\probamay$ contains no
  consistent system.
\end{definition}
The subset of consistent systems of
$\Systems^\probamust\cap\Systems^\probamay$ entirely specifies the set
of models of the considered \mmi. This leads to the operation of
pruning that we introduce next.  The \emph{pruning} of
$\contract$, written $\prune{\contract}$, is obtained as follows:
\begin{compactenum}
\item Let $\contract'$ the \mmi\ obtained from $\contract$ by thinning
  $\Systems^\probamust$ down to the intersection
  $\Systems^\probamust\cap\Systems^\probamay$;
\item Apply repeatedly the following transformation until fixed point,
  with initial value $k=0$ and $\contract_0={\contract'}$:
\begin{compactenum}
\item \label{05t89hpouihgpsdrui} Let $Q_{k,{\rm incon}}$ be the set of
  states $q$ of $\contract_k$ such that all inconsistent states of the
  state space $Q_k$ and set $Q_{k+1}=Q_k-Q_{k,{\rm incon}}$; by
  construction, replacing $Q_k$ by $Q_{k+1}$ does not modify the set
  of models of $\contract$;
  \item
  Performing
  this step may create new inconsistent states, however; and, thus,  we set $k\leftarrow{k+1}$ and return to
  step~\ref{05t89hpouihgpsdrui}.
\end{compactenum}
\end{compactenum}
Let $\prune{\contract}$ be the \mmi\ obtained at fixed point.
\begin{lemma}\label{lemma-cleaning}
  By construction, $\prune{\contract}$ and ${\contract}$ possess
  identical sets of models.
\end{lemma}
\begin{proof}
See Appendix~\ref{erpguioehpguio}.\eproof
\end{proof}
Note that by considering that \mmi{s} have finite sets of states, the
pruning procedure is terminating.
A \mmi\  $\contract$ is called \emph{inconsistent} iff it has no model,
i.e. iff the initial state $q_0$ does not belong to the set of states
of $\prune{\contract}$.
Unless otherwise specified, we assume in the sequel that:
\beq
\mbox{
\begin{minipage}{6cm}
	 {Pruning has been applied to every considered \mmi: $\prune{\contract}={\contract}$.}
\end{minipage}
}
\label {o59tgoehyrtiutid}
\eeq
\paragraph{Refinement.}
We now consider refinement which aims at comparing interfaces at
different stages of their design. Intuitively, it allows to check if
an interface is a more detailed version of an initial one. More
precisely, refining an interface amounts to exclude some potential
models from its set of models.
\begin{definition}[modal refinement]\label{def-refinement} \it
  Let $\contract_i,i=1,2$ be two {\mmi{s}} over $\alphabet$, a
  relation $\refines \; \subseteq \; {Q_1} \times {Q_2}$ is a
  \emph{modal refinement} iff, for all $(q_1,q_2)$ such that
  $q_1\refines{q_2}$ and for every $\action\in\alphabet$:
\vspace{-0.2cm}
  \beq
\bea{rcr}
\transmayindex{q_1}{\action}{\Systems^\probamay_1}{1}
&\  \Ra \  &
\transmayindex{q_2}{\action}{\Systems^\probamay_2}{2}
\mbox{ and } \ {\Systems^\probamay_1}\subseteq^\refines\,{\Systems^\probamay_2}
\\
\transmustindex{q_2}{\action}{\Systems^\probamust_2}{2}
&\  \Ra \  &
\transmustindex{q_1}{\action}{\Systems^\probamust_1}{1}
\mbox{ and } \ {\Systems^\probamust_1}\subseteq^\refines\,{\Systems^\probamust_2}
\eea
\label {507tho5u0857}
\eeq
Say that $\contract_1$ is a \emph{modal refinement} of $\contract_2$,
written $\contract_1\refines\contract_2$, if for $q_{0,1}\in Q_1$ and
$q_{0,2}\in Q_2$, we have $q_{0,1}\refines{q_{0,2}}$.

\end{definition}
\begin{figure}[ht]
  \begin{minipage}[c]{.48\linewidth}
   \begin{center}
       \includegraphics[scale=0.3]{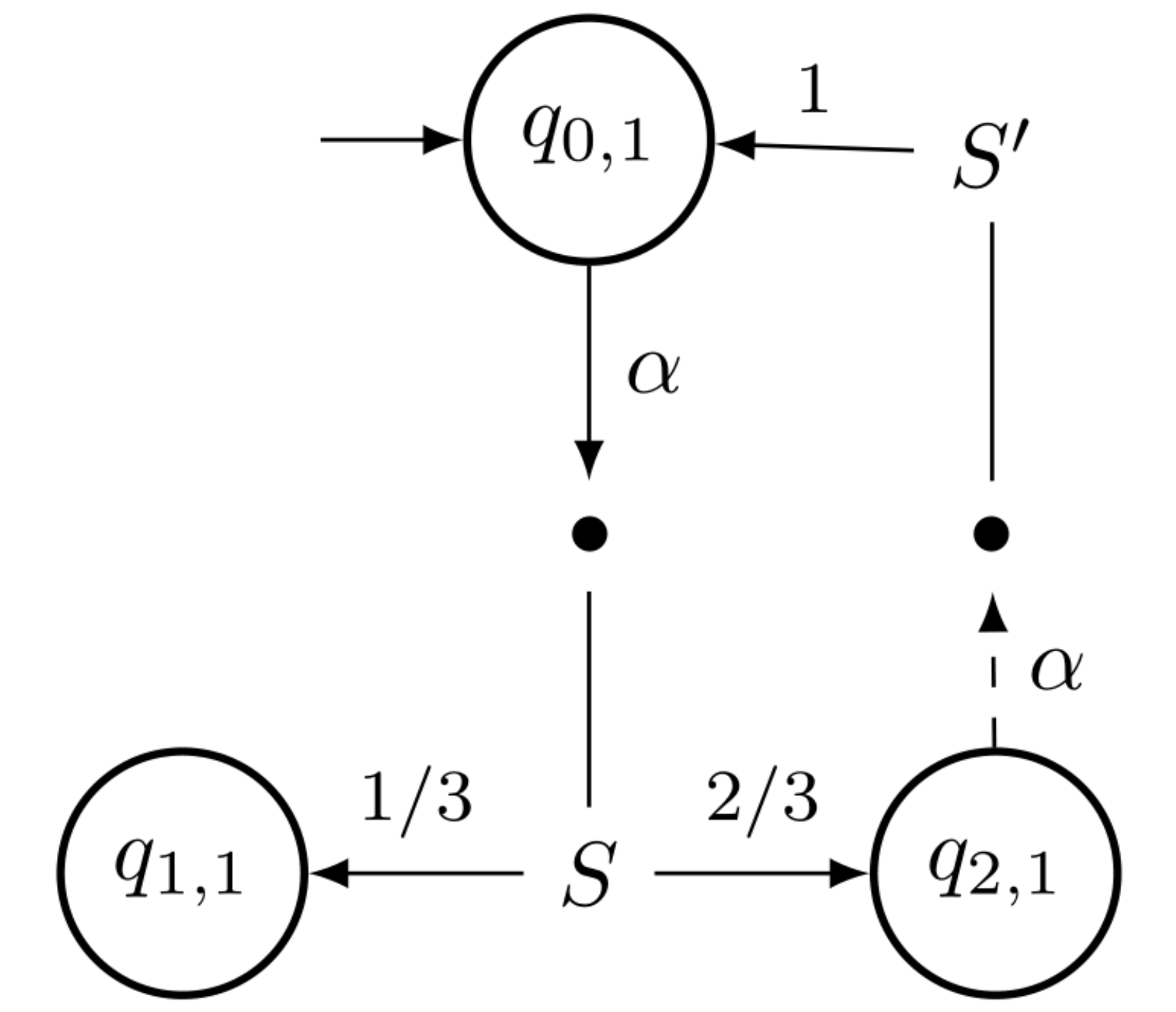}
     \end{center}
\end{minipage} \hfill
 \begin{minipage}[c]{.48\linewidth}
   \begin{center}
     \includegraphics[scale=0.3]{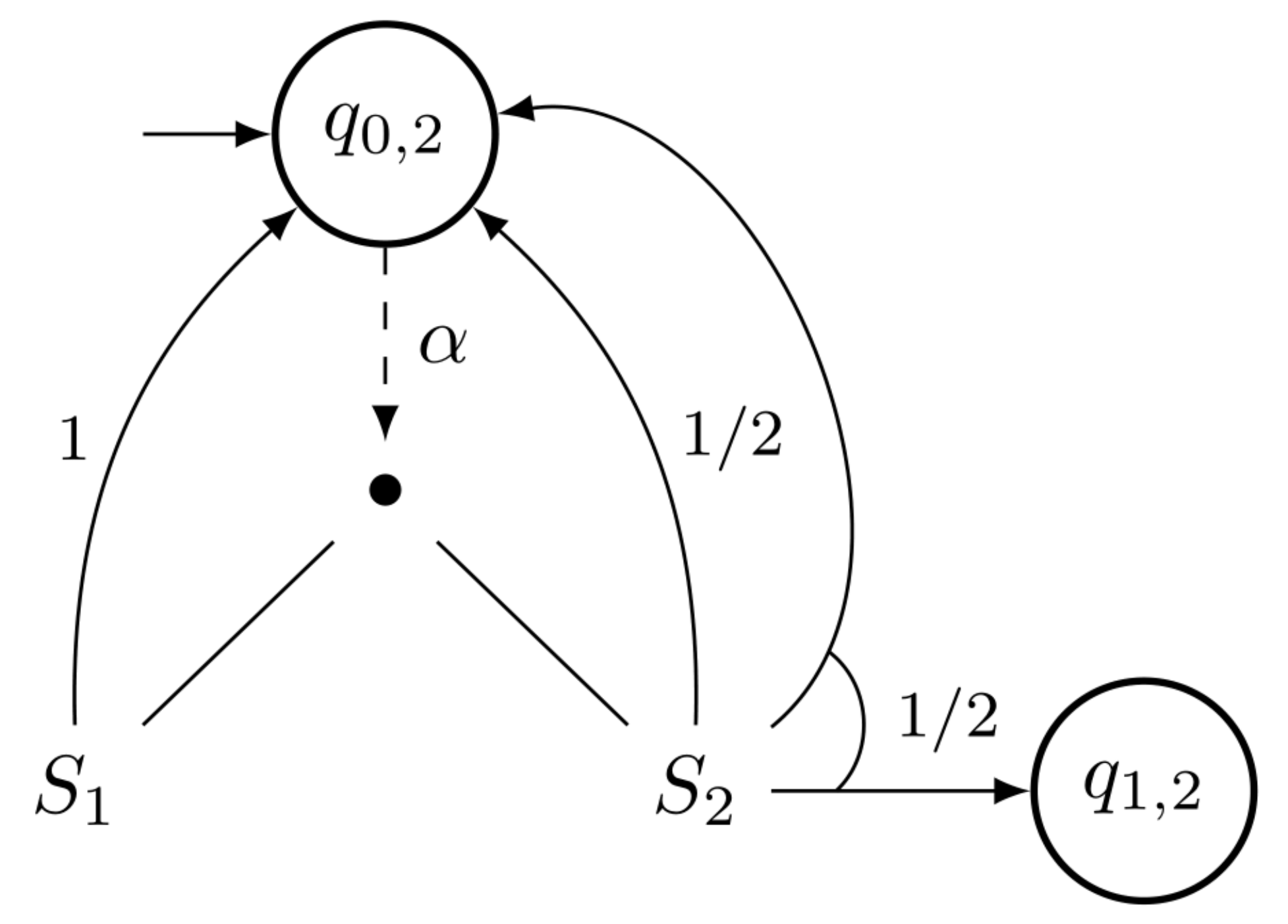}
 \end{center}
 \end{minipage}
 \caption{\sf Example of refinement, see Examples~\ref{rleuglkkkjh} and~\ref{kltrughltuih}.}\label{fig:ex-ref}
\end{figure}
\begin{example}\rm \label{kltrughltuih}
  Figure~\ref{fig:ex-ref} shows an example of refinement. The \mmi\;
  on the left is a refinement of the one on the right. Observe in
  particular that the \mmi\; on the left still encompasses
  probabilistic aspects but no longer has nondeterministic select for
  the next state. This is allowed by the lifting operation on mixed
  systems as already seen in Example~\ref{ex:lift}.\eproof
\end{example}

\begin{lemma} \label{eorgfuirhpu}
The modal refinement on \mmi{s} is a preorder.
\end{lemma}
\begin{proof}
	 See Appendix~\ref{guioenhuio}.\eproof
\end{proof}
\begin{theorem} \label {oer8w5t7ho0u}
  For $\contract_i,i=1,2$ two
  \mmi{s}, if $\contract_1\refines\contract_2$ then every model of
  $\contract_1$ is also a model of $\contract_2$.
\end{theorem}
\begin{proof}
	 See Appendix~\ref{eluihuihipgh}.\eproof
\end{proof}
Despite \mmi{s} are taken deterministic in Definition~\ref{def-empi},
modal refinement is correct but not fully abstract as for Modal
Automata~\cite{LarsenNW07CONCUR}: the following counterexample shows
that Theorem~\ref{oer8w5t7ho0u} cannot be strengthened to an
if-and-only-if statement. The reason for this is the nondeterminism
that sits in the mixed systems themselves.
\begin{cexample}\rm \label{kerughlutkhk}
  Consider the two ``purely non-probabilistic'' \mmi{s} over
  $\alphabet=\{a\}$ depicted in Figure~\ref{fig:incompl-ref}. They are
  purely non-probabilistic as any associated random follows a Dirac
  probability.  $\contract_1$ has only models that can perform at most
  two consecutive $\action$-actions. Any such implementation is also
  an implementation of $\contract_2$. However, it is not true that
  $\contract_1\refines\contract_2$ in the sense of modal refinement.\eproof
\begin{figure}[ht]
  \begin{minipage}[c]{.48\linewidth}
   \begin{center}
     \includegraphics[scale=0.3]{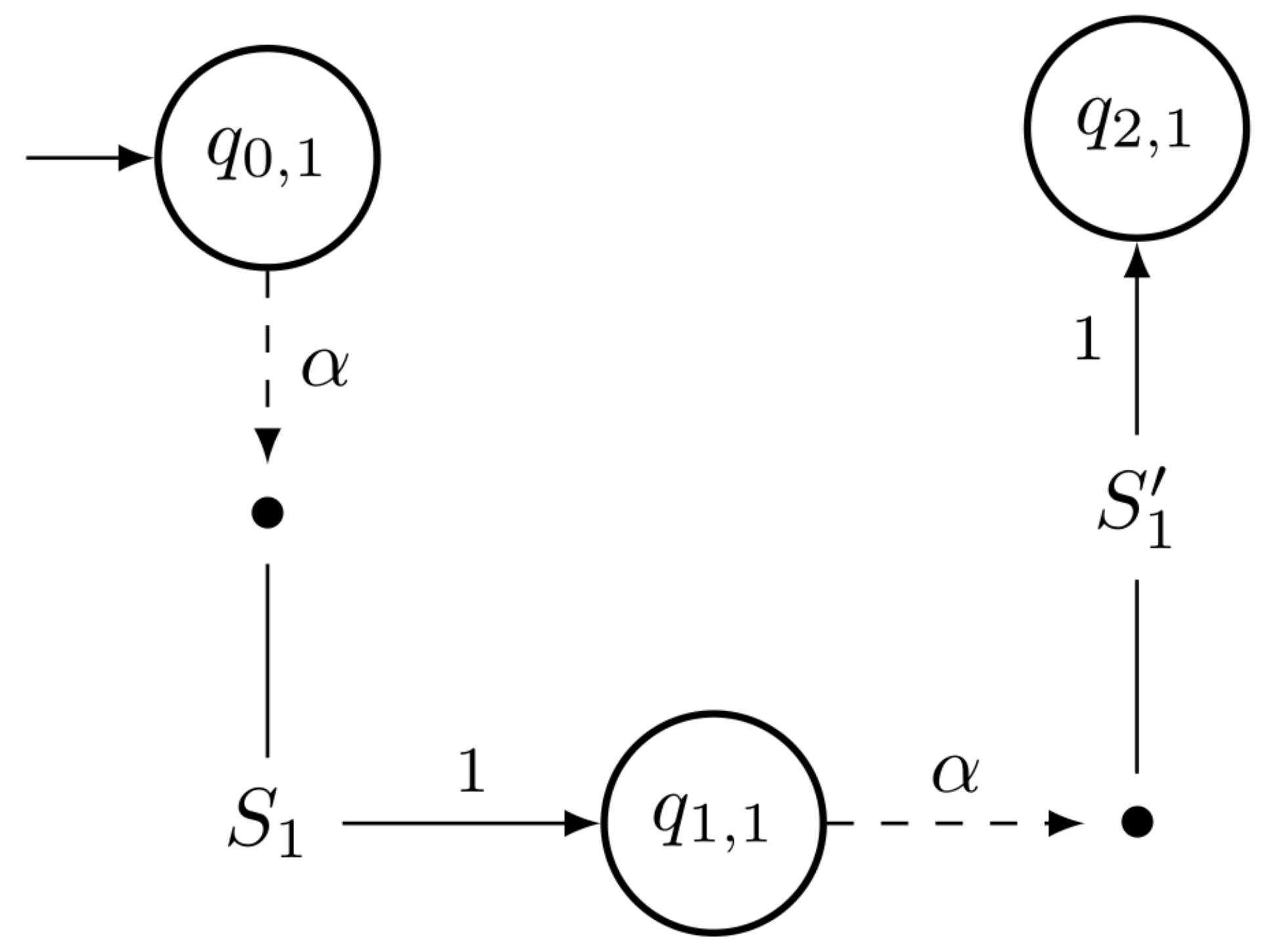}
     \end{center}
\end{minipage} \hfill
 \begin{minipage}[c]{.48\linewidth}
   \begin{center}
     \includegraphics[scale=0.3]{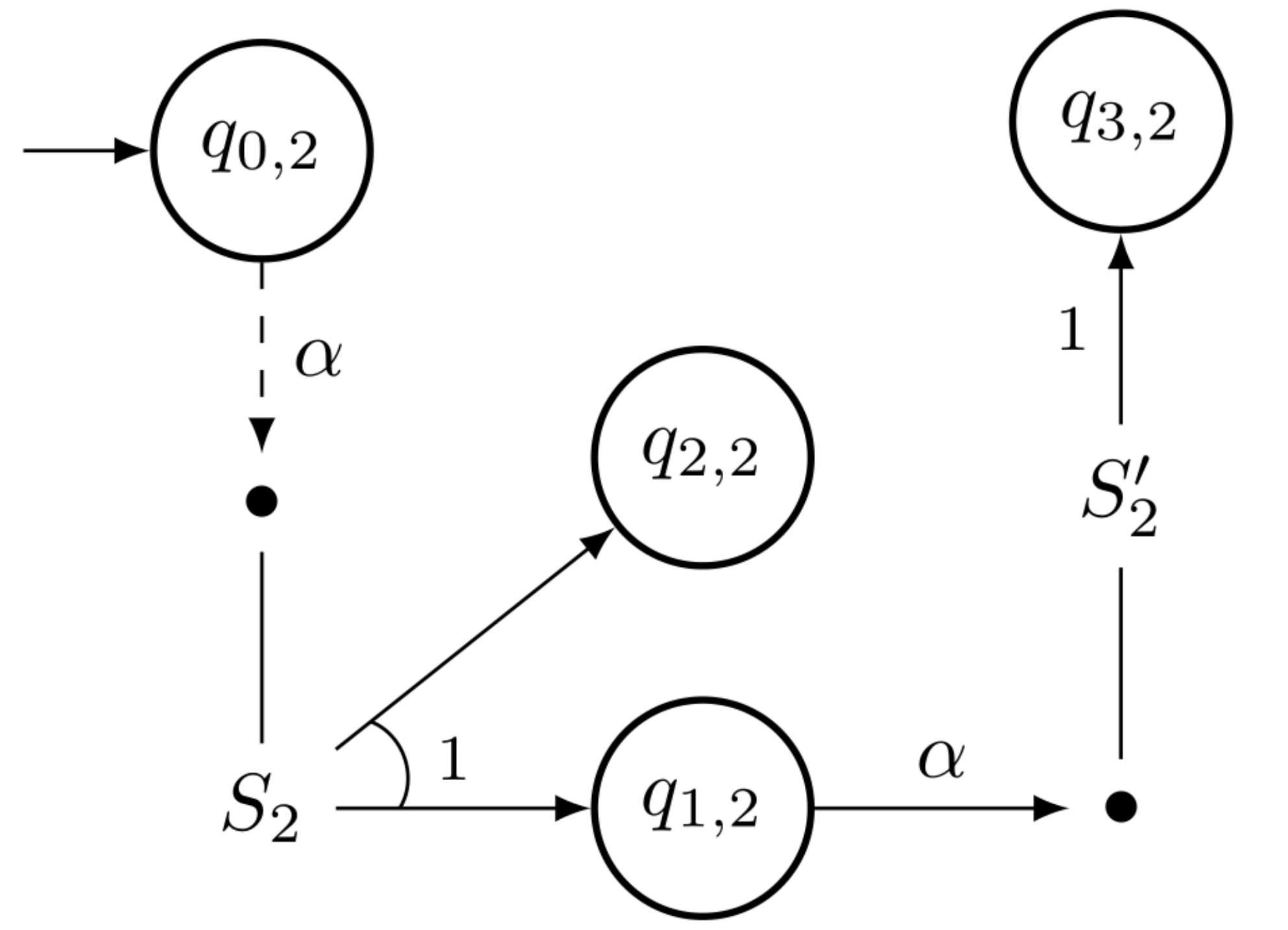}
 \end{center}
 \end{minipage}
 \caption{\sf Counterexample~\ref{kerughlutkhk} showing that modal refinement is not fully abstract}\label{fig:incompl-ref}
\end{figure}
\end{cexample}
\paragraph{Conjunction.}
Consider $\contract_i,i=1,2$ two {\mmi{s}} over $\alphabet$ with
respective sets of variables $X_1$ and $X_2$ and state spaces $Q_1$
and $Q_2$.
\beq
\Systems_1\times\Systems_2 &=& \left\{
\system_1\times\system_2
\mid{\system_i}\in\Systems_i
\right\}
\label {ortguhoerop68}
\eeq
where $\system_1\times\system_2$ is defined in Definition~\ref{lrtguiodtrhleguip}.

We are now able to define the conjunction of two \mmi{s}.
\begin{definition}[conjunction]\label{def-conjunction} \it
  Let $\contract_i,i=1,2$ be two {\mmi{s}} over $\alphabet$, their
  \emph{pre-conjunction} $\contract_1\underline{\wedge}\contract_2$
  has alphabet $\alphabet$, set of variables $X_1{\cup}X_2$, initial
  state $(q_{1,0},q_{2,0})$, and its \may\ and \must\ transition
  relations are the minimal relations satisfying the following rules:
   \beqq\bea{lclclcl}
\mbox{\emph{[ConjMay]}}&:&
\transmayindex{q_1}{\action}{\Systems^\probamay_1}{1}
&\emph{and}&
\transmayindex{q_2}{\action}{\Systems^\probamay_2}{2}
 &\Ra&
\transmayindex{(q_1,q_2)}{\action}{\Systems^\probamay_1{\times}\Systems^\probamay_2}{}
\\ \vspace*{-3mm} \\
\mbox{\emph{[ConjMust0]}}&:&
\transmustindex{q_1}{\action}{\Systems^\probamust_1}{1}
&\emph{and}&
\transmustindex{q_2}{\action}{\Systems^\probamust_2}{2}
 &\Ra&
\transmustindex{(q_1,q_2)}{\action}{\Systems^\probamust_1{\times}\Systems^\probamust_2}{}
\\ \vspace*{-3mm} \\
\mbox{\emph{[ConjMust1]}}&:&
\transmustindex{q_1}{\action}{\Systems^\probamust_1}{1}
&\emph{and}&
\ntransmustindex{q_2}{\action}{}{2}
 &\Ra&
\transmustindex{(q_1,q_2)}{\action}{\Systems^\probamust_1{\times}\Systems(X_2)}{}
\\ \vspace*{-3mm} \\
\mbox{\emph{[ConjMust2]}}&:&
\ntransmustindex{q_1}{\action}{}{1}
&\emph{and}&
\transmustindex{q_2}{\action}{\Systems^\probamust_2}{2}
 &\Ra&
\transmustindex{(q_1,q_2)}{\action}{\Systems(X_1){\times}\Systems^\probamust_2}{}
\eea
\eeqq
Pruning for consistency the pre-conjunction $\contract_1\underline{\wedge}\contract_2$ yields the \emph{conjunction} $\contract_1{\wedge}\contract_2$.
\end{definition}
Inconsistency may result from the rules [ConjMust1] and [ConjMust2].

\begin{theorem} \label{kuygpgiohtoi}
  \it   For any \mmi\ $\contract_1$ and $\contract_2$, any model of $\contract_1 \wedge \contract_2$ is also a model of
$\contract_1$ and $\contract_2$.
\end{theorem}
\begin{proof}
	 See Appendix~\ref{erouigheguio}.\eproof
\end{proof}
\paragraph{Parallel composition.}
Quite often in the literature, an issue of \emph{compatibility} arises along with the parallel composition of interfaces~\cite{DBLP:conf/emsoft/AlfaroH01,Raclet2011a}. As clarified in~\cite{Raclet2011a}, the issue of compatibility is due to the different roles played by the component and its environment in dealing with inputs and outputs. As we do not distinguish inputs and outputs here, compatibility is not an issue for us.

\begin{definition}[composition] \label {g07hortgig}
  Let $\contract_i,i=1,2$ be two {\mmi{s}} over $\alphabet$, their
  \emph{composition} $\contract_1\cpara\contract_2$ has alphabet $\alphabet$, set of variables $X_1{\cup}X_2$, and initial state $(q_{1,0},q_{2,0})$. Its transition relations are the minimal relations satisfying the following rules:
  \beqq
\transmayindex{q_1}{\action}{\Systems^\probamay_1}{1}
\emph{ and }
\transmayindex{q_2}{\action}{\Systems^\probamay_2}{2}
&\ \Ra \ &
\transmayindex{(q_1,q_2)}{\action}{\Systems^\probamay_1{\times}\Systems^\probamay_2}{}
\\
\transmustindex{q_1}{\action}{\Systems^\probamust_1}{1}
\emph{ and }
\transmustindex{q_2}{\action}{\Systems^\probamust_2}{2}
& \ \Ra \ &
\transmustindex{(q_1,q_2)}{\action}{\Systems^\probamust_1{\times}\Systems^\probamust_2}{}
\eeqq
\end{definition}
Parallel Composition does not raise any issue of consistency.
\begin{theorem} \label {rtpery9uy95e8}
The parallel composition $\cpara$ satisfies the following properties:
\begin{compactenum}
\item \label{jiytdfiout}
$\cpara$ is commutative and associative.
	\item \label{wieytdfiwet} For $\contract_1$ and $\contract_2$
          two \mmi{s}, we have:
          \vspace{-0.1cm}
\beq
\forall M_i, i=1,2: M_i\models\contract_i
&\Ra& M_1{\times}M_2\models\contract_1\cpara\contract_2
\label {lrtuhlup}
\\
\contract_1 \refines
  \contract_2
  &\Ra& \forall\contract: \contract \cpara \contract_1 \refines
  \contract \cpara \contract_2
  \label {rltiugyhpe}
\eeq
\end{compactenum}
\end{theorem}
\begin{proof}
	 See Appendix~\ref{epiougtrhpiough}.\eproof
\end{proof}
Last, let us mention that no quotient exists for Mixed
Interfaces. This is inherently due to the nondeterminism involved in
Mixed Systems. Probabilistic specification models already suffer from
the same limitation.

\section{Link to Constraint Markov Chains}
\label{rtuiothoguih}
Constraint Markov Chains have been proposed
in~\cite{DBLP:journals/tcs/CaillaudDLLPW11} as a specification
formalism with Markov Chains as models. Let us first recall their
basic definitions.
%

Let $A,B$ be sets of propositions with $A\subseteq{B}$. The
restriction of $W\subseteq{B}$ to $A$ is given by
$\restrict{A}{W}=W\cap{A}$. If $T\subseteq{2^B}$, then
$\restrict{A}{T}=\{\restrict{A}{W}\mid W\in{T}\}$. Let $\cP(Q)$
denote the set of all probabilities over the set $Q$. For $R$ and $Q$
two at most denumerable state spaces, a \emph{transition probability}
$\Delta$, from $R$ to $Q$, is a map \mbox{$\Delta:R{\times}Q\ra[0,1]$}
such that, for every $r\in{R}$, $\Delta(r,.)$ is a probability over
$Q$. If $\proba_R$ is a probability distribution over $R$, then
$\proba_R\Delta$ denotes the probability distribution over $Q$ defined
by:
\beq \proba_R\Delta(q)&=&\sum_{r\in{R}}\proba_R(r)\Delta(r,q)\,.
\label{wcuetrcde}
\eeq
A \emph{transition sub-probability} $\Delta$ from $R$ to $Q$ is a
map $\Delta:R{\times}Q\ra[0,1]$ such that, for every
$(r,q)\in{R}{\times}Q$, $\Delta(r,q)\geq{0}$ and, for every $r\in{R}$,
$\sum_{q\in{Q}}\Delta(r,q)\leq{1}$.


\begin{definition}\label{liugtil}
  A \emph{Markov Chain (\mcs)} is a tuple
$\mc=(R,r_0,\Proba,A,v)$,
where $R$ is a set of states containing the initial state $r_0$, $A$
is a set of atomic propositions, $v:R\ra{2^A}$ is a state valuation,
and $\Proba:{R}{\times}R{\ra}[0,1]$ is a transition probability.
\end{definition}

\begin{definition}\label{ltgouihuguih}
A \emph{Constraint Markov Chain (\cmcs)} is a tuple
\[
\cmc=(Q,q_0,\varphi,A,V)\,,
\]
where $Q$ is a set of states containing the initial state $q_0$, $A$
is a set of atomic propositions, $V:Q{\ra}2^{2^A}$ is a set of
admissible state valuations, and $\varphi:{Q}{\ra}2^{\cP(Q)}$ is a
\emph{constraint function,} mapping states to sets of probability
distributions over states.
\end{definition}
In practice, constraint functions will be only partially specified, in
that a function mapping $Q$ to $[0,1]^Q$ will be implicitly
complemented by the additional constraints to make the target being a
probability. This consideration is only practical and does not need to
be taken into account for our subsequent development.
Whenever needed to avoid confusion, we will denote by $A_\mc$ and
$v_\mc$, and $A_\cmc$ and $V_\cmc$, the elements $A$ and $V$ of $\mcs$
$\mc$ and $\cmcs$ $\cmc$.

\begin{definition}[satisfaction]
  \label{ywtfqwkfqkuqf} Let $\mc$ and $\cmc$ be respectively an $\mcs$
  and a $\cmcs$ such that $A_\cmc\subseteq{A_\mc}$.  A
  \emph{satisfaction relation} between $\mc$ and $\cmc$ is a relation
  $\simu\subseteq{R}{\times}{Q}$ such that, whenever $r\,\simu\;{q}$:
\begin{compactenum}
\item \label{dtyuywtdf} $\restrict{A_\cmc}{v_\mc(r)}\in V_\cmc(q)$;
\item \label{detfwiyt} there exists a {transition sub-probability}
  $\Delta$, from $R$ to $Q$, such that:
\begin{compactenum}
\item \label{uetydyut} for all $r'\in{R}$ such that $\Proba(r,r')>0$,
  $\Delta(r',q)$ is a transition probability from $R$ to $Q$, and;
\item $\Proba(r,.)\Delta\in\varphi(q)$, and;
\item \label{ioucdgfouy} if $\Delta(r',q')\neq{0}$, then
  $r'\simu\;{q'}$ holds.
\end{compactenum}
\end{compactenum}
$\mc$ satisfies $\cmc$ if and only if there exists a satisfaction
relation between $\mc$ and $\cmc$ that contains the two initial
states.
\end{definition}


\begin{definition}[weak refinement]
  \label{dweiytdfutdrfyf}
  Let $\cmc_1$ and $\cmc_2$ be two \cmcs\ such that
  $A_2{\subseteq}{A_1}$. The relation
  $\simu\subseteq{Q_1}{\times}{Q_2}$ is a \emph{weak refinement} iff,
  whenever $q_1\,\simu\;q_2$:
\begin{compactenum}
\item \label{ieufgierufyg}
  $\restrict{A_2}{V_1(q_1)}\subseteq{V_2(q_2)}$;
\item \label{idfgkukeyufgr} for any probability distribution
  $\proba_1\in\varphi_1(q_1)$, there exists a transition
  sub-probability $\Delta$, from $Q_1$ to $Q_2$, such that:
\begin{compactenum}
\item \label{efgeorufygeu} for all $q_1$ such that $\proba_1(q_1)>0$,
  $\Delta(q_1,.)$ is a probability over $Q_2$;
\item \label{efyugoyu} $\proba_1\Delta\in\varphi_2(q_2)$;
\item \label{geyugfwkugfwyt} if $\Delta(q'_1,q'_2)>0$, then
  $q'_1\,\simu\;q'_2$ holds.
\end{compactenum}
\end{compactenum}
We say that $\cmc_1$ \emph{weakly refines} $\cmc_2$, written
$\cmc_1{\preceq}\cmc_2$, if $q_{0,1}\;\simu\ q_{0.2}$.
\end{definition}

We now show that \mmi{s} subsume \cmcs. First, we define the embedding
of \mcs\ in \mmdp. Given $\mc=(R,r_0,\Proba,A,v)$ a Markov Chain, we
associate the \mmdp\
\mbox{$M_\mc=(\alphabet,X,r_0,\ra)$,}
where:
\begin{compactitem}
\item $\alphabet=\{\action\}$ (no need to mention the only
  action labeling transitions);
\item $X=\{\xi,v\}$ collects a variable $\xi$ with domain $R$,
  and the variable $v$;
\item $r_0\in{R}$ is the initial condition for $\xi$; no initial
  condition is given for $v$;
\item the transition relation is $\trans{r}{}{\system}{}$, where the
  mixed system $S=((\Omega,\proba),X,\cons)$ is such that:
  \vspace{-0.1cm}
  \beq
	\Omega{=}R \;;\; \proba{=}\Proba(r,.) \;\mbox{ and }\;
\cons = \left\{
(r,r,v(r))
\mid
r{\in}{R}
\right\} \subseteq \Omega{\times}({R}{\times}2^A)
\label{ewyjtdfwkdtf}
\eeq
\end{compactitem}

\begin{lemma}
  \label{elriueoiufhyi}
  Let $\mc$ be an $\mcs$. Then, $\mc$ and $M_\mc$ possess identical
  semantics.
\end{lemma}
\begin{proof}
	 See Appendix~\ref{eroiuhpiaeug}.\eproof
\end{proof}
Consider now the embedding of \cmcs\ in \mmi{s}. For any \cmcs\
$\cmc=(Q,q_0,\varphi,A,V)$, we associate a \mmi\
$\contract_\cmc=({\alphabet},X,q_0,\ramust,\ramay)$,
where:
\begin{compactitem}
\item $\alphabet=\{\action\}$ (no need to mention the only
  action labeling transitions);
\item $X=\{\xi,v\}$ collects a variable $\xi$ with domain $Q$,
  and a variable $v$ with domain $2^A$;
\item $q_0\in{Q}$ is the initial condition for $\xi$; no initial
  condition is given for $V$;
\item the \emph{must} transition relation $\ramust$ is empty;
\item the \emph{may} transition relation is
  $\transmay{q}{}{\Systems}{}$, where $\Systems$ is the set of mixed
  systems of the form $S_v=((\Omega,\proba),X,\cons_v)$, where $v(q)$
  ranges over ${V(q)}$ and:
  \beq 
	\Omega{=}Q \;;\; \proba{\in}\varphi(q)
  \;\mbox{ and }\;
\cons_v = \left\{
(q,q,v(q))
\mid
q{\in}{Q}
\right\} ~\subseteq~ \Omega{\times}({R}{\times}2^A)
\label{uytdfiytfiyt}
\eeq
\end{compactitem}
Whenever needed, we will use subscripts to relate items of $\mc$ and
$\cmc$ to their respective host entities.

\begin{theorem}
  \label{leprihiuhi}
  Let $\cmc$ be a $\cmcs$. Then, $\cmc$ and $\contract_\cmc$ possess
  identical semantics.
\end{theorem}
The previous Theorem decomposes into the two following lemmas.

\begin{lemma} \label{eruiehrpiufhio} Let $\mc$ and $\cmc$ be
  respectively an $\mcs$ and a $\cmcs$ such that
  $A_\cmc{\subseteq}{A_\mc}$. Then, $\mc$ satisfies $\cmc$ iff $M_\mc$
  is a model of $\contract_\cmc$.
\end{lemma}
\begin{proof}
	See Appendix~\ref{eoruifygouiy}.\eproof
\end{proof}

\begin{lemma}
  \label{elriuhepiu} Let $\cmc_1$ and $\cmc_2$ be two \cmcs{s} such
  that $A_2\subseteq{A_1}$. Then, $\cmc_1$ weakly refines $\cmc_2$ iff
  $\contract_{\cmc_1}$ refines $\contract_{\cmc_2}$.
\end{lemma}
\begin{proof}
	See Appendix~\ref{erlivgfuebhiu}.\eproof
\end{proof}

\section{Conclusion}
We have proposed the first interface theory that allows to mix probabilities and nondeterminism. Our component model is that of Mixed Markov Decision Processes (\mmdp) which subsume Probabilistic Automata. Our specification formalism is that of Mixed Interfaces. It offers a complete algebra for interfaces, namely: satisfaction, refinement, conjunction, and parallel composition. No quotient exists for Mixed Interfaces. This is inherently due to the nondeterminism involved in Mixed Systems. We presented our framework for the case of a fixed alphabet of actions. Following~\cite{Raclet2011a}, alphabet extension techniques allow to handle the general case, this will be reported in the extended version of this work.

Mixed Interfaces extend and clarify the satisfaction and refinement relations defined for Constraint Markov Chains. The same holds for Abstract
Probabilistic Automata (APA)~\cite{DBLP:conf/vmcai/DelahayeKLLPSW11}. CMC and APA differ from Mixed Interfaces regarding the parallel composition, however. The parallel composition for Mixed Interfaces is general (system variables can be shared), whereas the one for CMC or APA requires that the specifications for composition have disjoint sets of atomic propositions. Also, a subclass of Mixed Interfaces can be defined that tightly emulates the networks of \emph{Price Timed Automata} (\pta) equipped with their stochastic semantics~\cite{DBLP:journals/corr/abs-1106-3961}; a complete emulation, however, requires the consideration of some non-compositional priority policy for closed systems in this subclass. Due to lack of space, these additional results were not presented here.

This paper sets the theoretical foundations of formalisms that we plan
to apply to safety and vulnerability analysis as ongoing works. To
make it effective and amenable of tool development, one step further
is needed, namely a finitary syntax for specifying and manipulating
sets of Mixed Systems.



\bibliographystyle{plain}
\bibliography{proba}

\newpage


 \appendix

\section{Proofs regarding Mixed Systems}
\subsection{Proof of Lemma~\ref{wjdetyfuuy}}
\label{riuygfttyiohjih}
\begin{proof} It is enough to prove the result for compressed systems.
  For $i=1,2$, let $\system_i\equiv\system'_i$ and let $\varphi_i$ be
  the bijections defining the two equivalences. With reference to
  (\ref{usdqwcdur}), we define \beqq \varphi(\omega,q_1\join{q_2}) &=&
  \left( (\omega'_1,\omega'_2),q'_1\join{q'_2} \right) \mbox{ where }
  (\omega'_i,q'_i) =\varphi_i(\omega_i,q_i), i=1,2 \eeqq and we have
  to verify that $\varphi$ defines the desired equivalence between
  $\system\eqdef\system_1\mpara\system_2$ and
  $\system'\eqdef\system'_1\mpara\system'_2$. Using the expression
  (\ref{usdqwcdur}) for $\cons$ and the fact that
  $\proba=\proba_1\otimes\proba_2$, we get
	 \[\bea{rl}
	 \cons_\proba=&\{
(\omega,q_1\join{q_2})
\mid
q_1\compat{q_2} \,\wedge\, 
\omega_1\cons_1{q_1} \,\wedge\, \proba_1(\omega_1)>0 \,\wedge\, 
\omega_2\cons_2{q_2} \,\wedge\,  \proba_2(\omega_2)>0 
\} \\
=& \{
(\omega,q_1\join{q_2})
\mid
q_1\compat{q_2} \,\wedge\, 
(\omega_1,q_1)\in\cons_{1\proba}  \,\wedge\, 
(\omega_2,q_2)\in\cons_{2\proba}
\}
\eea
	 \]
	 Thus, for every $(\omega,q_1\join{q_2})\in\cons_\proba$, we have
$q'_1=q_1\compat{q_2}=q'_2  \mbox{ and }
(\omega'_i,q'_i)\in\cons_{i\proba}, i=1,2$,
whence $(\omega',q')\in\cons'_{\proba}$ and $\varphi$ is a bijection. Since $\proba'=\proba'_1\otimes\proba'_2$ we get $\proba'(\omega')=\proba(\omega)$, which finishes the proof.\eproof
\end{proof}

\subsection{Proof of Lemma~\ref{egfuioehrpo}}
\label{elrgfuilyu}
\begin{proof}
The result is immediate if both $\system_1$ and $\system'_1$ are compressed, see Definition~\ref{lighlalegfr}. It is thus sufficient to prove the lemma for the following two particular cases: $\system_1$ compresses to $\system'_1$, and the converse. 

Consider first the case: $\system_1$ compresses to $\system'_1$. Let $w(\omega_1,\omega_2)$ be the weighting function associated to the lifting $\system_1\NMPlift{\simu}\system_2$, and let $\proba'_1(\omega'_1)=\sum_{\omega_1\in\omega'_1}\proba_1(\omega_1)$ be the relation between $\proba'_1$ and $\proba_1$ in the compression of $\system_1$ to $\system'_1$. Then $w'(\omega'_1,\omega_2)=\sum_{\omega_1\in\omega'_1}w(\omega_1,\omega_2)$ defines the weighting function associated to the lifting $\system'_1\NMPlift{\simu}\system_2$. The other properties required to deduce $\system'_1\NMPlift{\simu}\system_2$ are immediate to prove.

Now, consider the alternative case: $\system'_1$ compresses to $\system_1$, with relation 
\beq\bea{c}
\proba_1(\omega_1)=\sum_{\omega'_1\in\omega_1}\proba'_1(\omega'_1)
\eea
\label{ofoihiouhgliuh}
\eeq
 between $\proba'_1$ and $\proba_1$, where $\omega'_1\in\omega_1$ means that $\omega_1$ is the equivalence class of $\omega'_1$ with respect to relation $\sim$ defined in (\ref{eoguheogihio}) when compressing $\system'_1$. This case is more involved since the construction of the weighting function $w'(\omega'_1,\omega_2)$ is nontrivial. We need $w'(\omega'_1,\omega_2)$ to satisfy the following relations:
\begin{equation}\bea{rl}
\forall \omega'_1:&
\proba'_1(\omega'_1)=\sum_{\omega_2}w'(\omega'_1,\omega_2)
\\
\forall \omega_2:&
\proba_2(\omega_2)=\sum_{\omega'_1}w'(\omega'_1,\omega_2)
\\
[2mm]
\forall(\omega'_1,\omega_2;q_1):&
\left[\bea{c}
w'(\omega'_1,\omega_2)>0 \\ \omega'_1\,\cons'_1\,{q_1}\eea\right] \Ra \exists q_2:\left[\bea{c}\omega_2\,\cons_2\,{q_2} \\ q_1\,\simu\,{q_2}\eea\right]
\\ 
[-3mm]
\wemph{.}
\eea
\label{eruithpeuhpoh}
\end{equation}
Focus first on the first two lines of (\ref{eruithpeuhpoh}). We claim that to find a solution $w' $ to the first two lines of (\ref{eruithpeuhpoh}), it is enough to find a solution to the following system of equations where the unknowns are the values $w'(\omega'_1,\omega_2)$:
\beq
\bea{rrcl}
\forall\omega_1,\omega_2:&
\sum_{\omega'_1\in\omega_1}w'(\omega'_1,\omega_2)&=&w(\omega_1,\omega_2)
\\
[1mm]
\forall \omega'_1:&
\sum_{\omega_2}w'(\omega'_1,\omega_2)&=&\proba'_1(\omega'_1)
\eea
\label{eliuthleiuty}
\eeq
Observe that $\sum_{\omega'_1}w'(\omega'_1,\omega_2)=\sum_{\omega_1}\sum_{\omega'_1\in\omega_1}w'(\omega'_1,\omega_2) = \sum_{\omega_1}w(\omega_1,\omega_2)=\proba_2(\omega_2)$ since $w(\omega_1,\omega_2)$ is the weighting function of the lifting $\system_1\NMPlift{\simu}\system_2$. Our claim is thus justified.

To solve (\ref{eliuthleiuty}), we observe that it splits into the following independent subsystems in which $\omega_1$ is seen as a parameter ranging over $\Omega_1$:
\beq
\bea{rrcl}
\forall\omega_2:&
\sum_{\omega'_1\in\omega_1}w'(\omega'_1,\omega_2)&=&w(\omega_1,\omega_2)
\\
[1mm]
\forall\omega'_1{\in}\omega_1:& \sum_{\omega_2}w'(\omega'_1,\omega_2)&=&\proba'_1(\omega'_1)
\eea
\label{lurhtliuehou}
\eeq
The rows of System (\ref{lurhtliuehou}) are linked by the following relation: summing over all $\omega_2$ the first set of equations yields $\sum_{\omega_2}\sum_{\omega'_1\in\omega_1}w'(\omega'_1,\omega_2)=\sum_{\omega_2}w(\omega_1,\omega_2)$ $=\proba_1(\omega_1)$, whereas 
summing over all $\omega'_1\in\omega_1$ the second set of equations yields
$\sum_{\omega'_1\in\omega_1}\sum_{\omega_2}w'(\omega'_1,\omega_2)=\sum_{\omega'_1\in\omega_1}\proba'_1(\omega'_1)=\proba_1(\omega_1)$, and the two resulting equations are identical, by Fubini theorem.

Let $K_2$ be the cardinal of $\Omega_2$ and $L_1$ the cardinal of the set \mbox{$\{\omega' _1\mid\omega'_1\in\omega_1\}$}.
We distinguish the three cases $L_1=1$, $K_2=1$, and $L_1,K_2>1$. 

If $L_1=1$, setting $\forall\omega_2:w'(\omega'_1,\omega_2){=}w(\omega_1,\omega_2)$ yields a solution to (\ref{lurhtliuehou}) since the last equation of (\ref{lurhtliuehou}) is trivially satisfied. 

Case $K_2=1$ is trivial either, since $w'(\omega'_1,\omega_2)=\proba'(\omega'_1)$ is the unique solution.

For the third case $L_1,K_2{>}1$, the system (\ref{lurhtliuehou}) has more unknowns ($K_2{\times}L_1$) than equations ($K_2{+}L_1$). To prove that it indeed has solutions, we reorganize the unknowns $w'(\omega'_1,\omega_2)$ into a row matrix by listing as a submatrix the $w'(\omega'_1,\omega_2)$ for every fixed value of $\omega'_1$ and $\omega_2$ ranging over $\Omega_2$:
\[\bea{l}
\hspace*{-1mm}
\left[w'(\omega'_{11},\omega_{21}),\dots,w'(\omega'_{11},\omega_{2K_2}),\right.
\\
w'(\omega'_{12},\omega_{21}),\dots,w'(\omega'_{12},\omega_{2K_2}),
\\
\hspace*{2.5cm} \vdots
\\
\left.w'(\omega'_{1L_1},\omega_{21}),\dots,w'(\omega'_{1L_1},\omega_{2K_2})\right]
\eea
\]
We arrange the equations as indicated in (\ref{lurhtliuehou}): we put on top the $K_2$ equations parameterized by $\omega_2$ followed by the $L_1$ equations parameterized by $\omega'_1$. For $A$ and $A'$ two matrices, of respective sizes $m{\times}n$ and $m'{\times}n'$, we denote by $A\otimes{A'}$ their \emph{Kronecker product} obtained by replacing the $a_{ij}$ entry of $A$ by the matrix $a_{ij}.A'$, thus obtaining a matrix of size $(m{\times}m')\times(n{\times}n')$.
With these conventions and notations, the matrix of the linear system (\ref{lurhtliuehou}) takes the following form, where $\Id_m$ denotes the identity matrix of size $m{\times}m$:
\beq
M = 
\left[\bea{c}
[\overbrace{1 \dots 1}^{L_1\;{\rm times}}]\otimes{\Id}_{K_2}
\\ [4mm]
\Id_{L_1}\otimes[\underbrace{1 \dots 1}_{K_2\;{\rm times}}]
\eea\right], 
\label{erouhgltohj}
\eeq
of size $(K_2{+}L_1)\times(K_2{\times}L_1)$. The proof that the first two lines of (\ref{eruithpeuhpoh}) are satisfied rests on the two lemmas~\ref{lerfiuglo} and~\ref{rliguhli} below.

We move to the third line of (\ref{eruithpeuhpoh}). The conditions $w'(\omega'_1,\omega_2)>0$ and $\omega'_1\,\cons'_1\,{q_1}$ together imply 
 $w(\omega_1,\omega_2)>0$ and $\omega_1\,\cons_1\,{q_1}$ where $\omega_1$ is the equivalence class of $\omega'_1$, i.e., $\omega'_1\in\omega_1$. The right hand side then follows since we have $\system_1\NMPlift{\simu}\system_2$. This finishes the proof.
\end{proof}

\begin{lemma}
	\label{lerfiuglo} 
	If $L_1,K_2>1$, then the matrix $M$ defined in (\ref{erouhgltohj}) has row rank equal to $L_1+K_2-1$.
	
\end{lemma}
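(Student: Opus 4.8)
The plan is to recognize that, up to relabeling rows and columns, $M$ is the (unsigned) vertex--edge incidence matrix of the complete bipartite graph with parts of sizes $K_2$ and $L_1$, and then to pin its rank down by an elementary left null-space computation. First I would make explicit the column indexing implicit in $(\ref{erouhgltohj})$: columns are indexed by pairs $(\ell,k)$ with $\ell\in\{1,\dots,L_1\}$, $k\in\{1,\dots,K_2\}$; the $k$-th row of the top block $[1\dots1]\otimes\Id_{K_2}$ has a $1$ exactly in the columns $(\ell,k)$, $\ell=1,\dots,L_1$; and the $\ell$-th row of the bottom block $\Id_{L_1}\otimes[1\dots1]$ has a $1$ exactly in the columns $(\ell,k)$, $k=1,\dots,K_2$. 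Consequently every column of $M$ has exactly two nonzero entries, one in top-row $k$ and one in bottom-row $\ell$.

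For the upper bound $\mathrm{rank}(M)\le L_1+K_2-1$ I would observe that the sum of the $K_2$ rows of the top block is the all-ones row vector (each column is hit once, from the top-row equal to its second coordinate), and that the sum of the $L_1$ rows of the bottom block is also the all-ones vector; subtracting these two identities exhibits a linear dependence among the $L_1+K_2$ rows of $M$, which is nontrivial since $M$ has at least one column (we are in the case $L_1,K_2>1$).

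For the matching lower bound I would show the left null-space of $M$ is exactly one-dimensional. Writing a relation $\sum_k a_k(\text{top-row }k)+\sum_\ell b_\ell(\text{bottom-row }\ell)=0$ and reading off the entry in column $(\ell,k)$ yields $a_k+b_\ell=0$ for every pair $(\ell,k)$; fixing $k$ and letting $\ell$ range shows all the $b_\ell$ coincide, say $b_\ell=-c$, whence $a_k=c$ for all $k$, so the relation is a scalar multiple of the one already found. Hence the left null-space has dimension $1$ and $\mathrm{rank}(M)=L_1+K_2-1$.

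I do not expect a real obstacle: the computation is routine once the two-ones-per-column structure is laid out. The only points deserving a moment's care are that the exhibited dependence is genuinely nonzero (ensured by the presence of at least one column) and that the null-space argument forces dimension exactly one rather than merely an upper bound on the rank; both follow immediately from the incidence structure. One could equally state the result by invoking the classical fact that the incidence matrix of a connected bipartite graph on $n$ vertices has rank $n-1$ over $\bR$, but the self-contained computation above is just as short.
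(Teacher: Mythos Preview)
Your proof is correct and considerably shorter than the paper's. You exploit the fact that $M$ is precisely the vertex--edge incidence matrix of the complete bipartite graph $K_{K_2,L_1}$: each column $(\ell,k)$ carries exactly two ones, one in top-row $k$ and one in bottom-row $\ell$. From this, the single linear dependence (sum of top rows equals sum of bottom rows) and the one-dimensionality of the left null space (from $a_k+b_\ell=0$ for all $k,\ell$) fall out immediately.

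The paper proceeds quite differently: it argues by double induction on $(L_1,K_2)$, maintaining as invariant that a particular $(L_1{+}K_2{-}1)\times(L_1{+}K_2{-}1)$ submatrix $\overline{M}(L_1,K_2)$ (obtained by deleting the first row and selecting matching columns) has exactly one \emph{traversal}, hence determinant $\pm 1$, hence full rank. The base case $L_1=K_2=2$ is checked by hand, and the inductive steps track how the block structure of $M$ grows when either parameter is incremented. The upper bound on the rank is established separately, just before the lemma, via the Fubini relation linking the two families of equations.

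Your approach is more conceptual and self-contained (both bounds in one short argument), and it makes transparent why the hypothesis $L_1,K_2>1$ is in fact not needed for the rank statement itself. The paper's approach, by contrast, exhibits an explicit invertible submatrix, which is a slightly stronger piece of information---though this extra strength is not used downstream, since the subsequent Lemma~\ref{rliguhli} only requires the rank value. One minor remark: your parenthetical ``which is nontrivial since $M$ has at least one column'' is not quite the right justification---the dependence is nontrivial simply because the coefficient vector $(1,\dots,1,-1,\dots,-1)$ is nonzero, regardless of the number of columns.
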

\begin{proof}
	We proceed by double induction over $L_1,K_2$. 
	The base case is $L_1{=}K_2{=}2$, for which matrix $M$ is equal to
	\[
	M=\left[\bea{cccc}
	\gemph{1} & \gemph{0} & \gemph{1} & \gemph{0} \\
	\blemph{0} & \remph{1} & {0} & {\gemph{1}} \\
	\remph{1} & \blemph{1} & {0} & \gemph{0} \\
	\blemph{0} & {0} & {\remph{1}} & \gemph{1}
	\eea
	\right]
	\]
	$M$ is singular but the submatrix obtained by erasing the first row and the last column in $M$ (the latter are shown in green)
	is regular. This is proved by observing that this submatrix possesses only one traversal,\footnote{A \emph{traversal} of a $p{\times}p$-matrix $B$ is a selection of $p$ non-zero entries of $B$ visiting all columns and rows of $B$.} shown in red, hence its determinant equals $\pm{1}$ and cannot be zero.
	
	In the rest of the proof, we use the convention that symbols written in boldface denote a matrix of suitable sizes filled with the indicated symbol. For example, $\bf 0$ denotes a matrix filled with zeros, the sizes of which depend on the context.
	
	For the induction argument, let $M(L_1,K_2)$ denote the matrix defined in (\ref{erouhgltohj}) with the values $L_1,K_2$ and $\overline{M}(L_1,K_2)$ the square submatrix of $M(L_1,K_2)$ obtained by erasing the first row in $M(L_1,K_2)$ and then selecting columns accordingly. Using these notations, the invariant of the induction argument is the following:
	\beq
	\mbox{
	 The number of traversals of $\overline{M}(L_1,K_2)$  equals  $1$.
}
	\eeq
	
	Increasing $L_1$ by $1$: matrix $M(L_1,K_2)$ becomes
	\beq
	M(L_1{+}1,K_2)= 
	\left[\bea{cc}	
	M(L_1,K_2) & \left[~\bea{c} \remph{\Id_{K_2}} \\ \\ \remph{\bf 0} 
	\eea~\right]
	\\ [7mm]
	\remph{\bf 0}  & \bigl[\underbrace{\remph{1 \dots 1}}_{K_2\;{\rm times}}\bigr]
	\eea\right]
	\label{gouhohjiotu}
	\eeq
	where the added part is highlighted in red.
	We construct $\overline{M}(L_1{+}1,K_2)$ by adding, to $\overline{M}(L_1,K_2)$, one row below and one among the $K_2$ new columns shown on the right part of $M(L_1{+}1,K_2)$. For this case the number of traversals keeps constant.
	
	Increasing $K_2$ by $1$: matrix $M(L_1,K_2)$ becomes
	\beqq
M(L_1,K_2{+}1) =
\left[\bea{c}
\bigl[\overbrace{1 \dots 1}^{L_1\;{\rm times}}\bigr]\otimes\left[\bea{ccc}
{\Id}_{K_2} && \remph{\bf 0} \\ \remph{\bf 0} && \remph{1}
\eea\right]
\\ [4mm]
\hspace*{9mm}
\Id_{L_1}\otimes\bigl[\underbrace{1 \dots 1}_{K_2\;{\rm times}} ~~ \remph{1}~~\bigr]
\eea\right]
\eeqq
where the additional entries are shown in red. We move the new row 
\[
[\overbrace{1 \dots 1}^{L_1\;{\rm times}}]\otimes[\remph{{\bf 0}~~ 1}]
\]
 to the last line of the matrix. The new columns arising from \[
[\overbrace{1 \dots 1}^{L_1\;{\rm times}}]\otimes\left[\bea{c}
\remph{\bf 0} \\ \remph{1} \\ \remph{1} \eea
\right]
\]
are all shifted to the right to become the last ones of the matrix while keeping the same order. Having done this, we end up with a reorganized matrix that has the following form:
	\beqq
	M(L_1,K_2{+}1)= 
	\left[\bea{cc}	
	M(L_1,K_2) & \left[~\bea{c} ~\,\remph{\star} ~\,
	\eea~\right]
	\\ [3mm]
	\remph{\bf 0}  & \bigl[\underbrace{\remph{1 \dots 1}}_{K_2\;{\rm times}}\bigr]
	\eea\right]
	\eeqq
	where the added part is highlighted in red. Again the number of traversals remains constant.
\end{proof}

In the following, for $X$ a matrix, $X^T$ denotes its transpose. Also, we take the convention that vectors identify with column matrices.

\begin{lemma}
	\label{rliguhli} 
	Let $A$ be an $m{\times}n$ matrix with $m\leq{n}$ such that $A$ has rank $m-1$, and there exists a non-zero $m$-vector $v$ such that $v^TA=[0\dots0]$. Then, for every $m$-vector $y$ such that $v^Ty=0$, the linear system $Ax=y$ possesses a solution.
	
\end{lemma}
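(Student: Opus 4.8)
The plan is to prove this by elementary dimension counting on the image of $A$. Write $\mathrm{Im}(A)\subseteq\bR^{m}$ for the column space of $A$. Since $A$ has rank $m-1$, we have $\dim\mathrm{Im}(A)=m-1$. Next, using the identity $\langle Ax,u\rangle=\langle x,A^{T}u\rangle$ for all $x\in\bR^{n}$, the orthogonal complement $\mathrm{Im}(A)^{\perp}$ coincides with the left null space $N=\{u\in\bR^{m}\mid u^{T}A=0\}$, so $\dim N=m-(m-1)=1$. The hypothesis supplies a nonzero $v\in N$; being a nonzero vector in a one-dimensional space, $v$ spans $N$. Therefore $\mathrm{Im}(A)=N^{\perp}=\{y\in\bR^{m}\mid v^{T}y=0\}$, where I use that $(U^{\perp})^{\perp}=U$ for subspaces of a finite-dimensional space. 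Consequently, any $m$-vector $y$ with $v^{T}y=0$ lies in $\mathrm{Im}(A)$, i.e.\ there exists $x$ with $Ax=y$, which is the claim.

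An equivalent, more hands-on route (useful if one prefers to avoid inner-product arguments) is the following. Pick a set $I$ of $m-1$ rows of $A$ that are linearly independent; the remaining row, say $a_{m}$, is a combination $a_{m}=\sum_{i\in I}\mu_{i}a_{i}$. Because the rows indexed by $I$ are independent, the left-null vector $v$ must have nonzero component $v_{m}$ in the position of $a_{m}$, and then $v^{T}A=0$ forces $\mu_{i}=-v_{i}/v_{m}$ for $i\in I$. The subsystem $A_{I}x=y_{I}$ has full row rank $m-1\le n$, hence has a solution $x$. Finally, $v^{T}y=0$ gives $y_{m}=-\sum_{i\in I}(v_{i}/v_{m})y_{i}=\sum_{i\in I}\mu_{i}y_{i}=\sum_{i\in I}\mu_{i}a_{i}^{T}x=a_{m}^{T}x$, so $x$ also satisfies the last equation, and thus $Ax=y$.

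There is no genuine obstacle here: the statement is standard finite-dimensional linear algebra, and the only care required is the bookkeeping — correctly identifying $\mathrm{Im}(A)^{\perp}$ with the left null space and carrying the rank–nullity count, or, in the alternative argument, checking that $v$ is supported on the row outside the chosen independent family.
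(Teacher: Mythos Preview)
Both arguments are correct. Your first proof is shorter and more conceptual than the paper's: you identify $\mathrm{Im}(A)$ directly as the hyperplane $v^{\perp}$ via the standard identity $\mathrm{Im}(A)^{\perp}=\{u\mid u^{T}A=0\}$, and then the conclusion is immediate from rank--nullity. The paper instead carries out an explicit change of basis: it completes $v$ to a basis of $\bR^{m}$, forms the invertible matrix $C$ with these vectors as columns, and premultiplies $Ax=y$ by $C^{T}$; this simultaneously annihilates the first row of the transformed matrix and the first entry of the transformed right-hand side, reducing to a full-row-rank $(m{-}1)\times n$ system $Bx=z$. Your second, hands-on route is close in spirit to the paper's---both reduce to an $(m{-}1)$-equation system of full row rank---except that you select $m{-}1$ independent rows of $A$ directly rather than changing coordinates. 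The orthogonal-complement argument buys brevity and avoids all coordinate bookkeeping; the paper's approach (and your second one) is more constructive and makes the reduced system explicit.
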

\begin{proof}
	We complete $v$ with $m-1$ vectors to get a basis of $\bR^m$ and denote by $C$ the $m{\times}m$-matrix obtained by taking this basis as its columns, $v$ being the first one. Premultiplying the linear system $Ax=y$ by $C^T$ yields $C^TAx=C^Ty$. Vector $C^Ty$ has a $0$ as its first entry, completed by an $m{-}1$-vector that we denote by $z$. Similarly, matrix $C^TA$ has its first row equal to zero, and we denote by $B$ the matrix obtained by erasing the first row of $C^TA$. Our original linear system is then equivalent to the reduced linear system $Bx=z$. By assumption, $B$ has rank $m{-}1$, i.e., full row rank, which ensures that a solution to $Bx=z$ exists (possibly not unique).
\end{proof}

To prove that the first two lines of (\ref{eruithpeuhpoh}) are satisfied, we apply Lemma~\ref{lerfiuglo} to the matrix $M$ defined in (\ref{erouhgltohj}), and then Lemma~\ref{rliguhli} to the matrix $M$ with 
\[
v^T=\bigl[\;
\underbrace{1~\dots ~1}_{K_2\;{\rm times}}~
\underbrace{{-}1~\dots ~{-}1}_{L_1\;{\rm times}}
\;\bigr]
\]
\finremovalbert{}

\subsection{Proof of Lemma~\ref{uweygfkiutygf}}
\label{guihepioru}
\removalbert{shift proof to appendix}

\begin{proof}
  By definition,
  $\system_1\left(\NMPlift{\simu_{12}}\reldot\NMPlift{\simu_{23}}\right)\system_3$
  iff there exists $\system_2\in\Systems(Q_2)$ such that
  $\system_1\NMPlift{\simu_{12}}\system_2$ and
  $\system_2\NMPlift{\simu_{23}}\system_3$, that is, there exists two
  weighted functions $w_{12}$ over $\Omega_1\times{\Omega_2}$
  and $w_{23}$ over $\Omega_2\times{\Omega_3}$, such that
\begin{itemize}
	\item $w_{12}$ projects to $\proba_1$ and $\proba_2$, and $w_{23}$ projects to $\proba_2$ and $\proba_3$, and 
  \item 
  $w_{12}(\omega_1,\omega_2)>0$ and $\omega_1\,\cons_1\,{q_1}$ together  imply the existence of a $q_2$ such that 
$\omega_2\,\cons_2\,{q_2}$ and $q_1\,\simu_{21}\,{q_2}$; 
  
  $w_{23}(\omega_2,\omega_3)>0$ and $\omega_2\,\cons_2\,{q_2}$ together  imply the existence of a $q_3$ such that 
$\omega_3\,\cons_3\,{q_3}$ and $q_2\,\simu_{23}\,{q_3}$. 
\end{itemize}
  On the other hand,   $\system_1\NMPlift{(\simu_{12}\reldot\simu_{23})}\system_3$ iff there
  exists a weighted function $w$ over $\Omega_1{\times}\Omega_3$ projecting to
  $\proba_1$ and $\proba_3$ and such that: $w(\omega_1,\omega_3)>0$ and $\omega_1\,\cons_1\,{q_1}$ together imply the existence of a $q_3$ such that 
$\omega_3\,\cons_3\,{q_3}$ and  $q_1\,(\simu_{12}\reldot\simu_{23})\,{q_3}$.
  
  We thus construct the following function $w$
  defined over $\Omega_1\times{\Omega_3}$:
\beq
w(\omega_1,\omega_3) = \sum_{\omega_2\in\Omega_2} 
{w_{12}(\omega_1,\omega_2).w_{23}(\omega_2,\omega_3)}
\label{ltgiudtlu}
\eeq
To show that
$\system_1\NMPlift{(\simu_{12}\reldot\simu_{23})}\system_3$, we have to prove
the following regarding $w$:
\begin{itemize}
\item if $w(\omega_1,\omega_3) > 0$ and $\omega_1\,\cons_1\,{q_1}$ hold, then we can find $q_3$ such that 
$\omega_3\,\cons_3\,{q_3}$ and $q_1~(\simu_{12}\reldot\simu_{23})~q_3$. To show this, note that if
  $w(\omega_1,\omega_3) > 0$ then, by (\ref{ltgiudtlu}),  we can find an $\omega_2$ such that $w_{12}(\omega_1,\omega_2){.}w_{23}(\omega_2,\omega_3)>0$. Since $w_{12}(\omega_1,\omega_2)>0$, there exists some $q_2$ such that 
  $\omega_2\cons_2{q_2}$ and $q_1\simu_{12}q_2$. Since $w_{23}(\omega_2,\omega_3)>0$, there exists some $q_3$ such that 
  $\omega_3\cons_3{q_3}$ and $q_2\simu_{23}q_3$. 
Now, we have  $q_1~\simu_{12}~q_2$
  and $q_2~\simu_{23}~q_3$, which implies $q_1~(\simu_{12}\reldot\simu_{23})~q_3$;
  
\item $w$ projects to $\proba_3$:
\beqq
\sum_{\omega_1} w(\omega_1,\omega_3) & = & \sum_{\omega_1}\sum_{\omega_2} 
{w_{12}(\omega_1,\omega_2){.}w_{23}(\omega_2,\omega_3)}
\\
\mbox{by Fubini}
 & = & \sum_{\omega_2}\sum_{\omega_1} 
 {w_{12}(\omega_1,\omega_2){.}w_{23}(\omega_2,\omega_3)}
\\
 & = & \sum_{\omega_2} 
{w_{23}(\omega_2,\omega_3)}
\underbrace{\sum_{\omega_1}w_{12}(\omega_1,\omega_2)}_{=1} \\
& = & \proba_3(\omega_3)
\eeqq

\item $w$ projects to $\proba_1$: this is proved similarly.
\end{itemize}
Therefore,
$\system_1\left(\NMPlift{\simu_{12}}\reldot\NMPlift{\simu_{23}}\right)\system_3$
iff $\system_1\NMPlift{(\simu_{12}\reldot\simu_{23})}\system_3$.
\end{proof}
\finremovalbert{}

\section{Proofs regarding MMDPs}
\subsection{Proof of Lemma~\ref{glrtukghtrllsdukfg}}
\label{wroifuwgopi}

\begin{proof}
Set $M'\eqdef M'_1\times{M'_2}$ and $M\eqdef M_1\times{M_2}$. Define the relation $\leq$ between $R'$ and $R$ by: $r'\leq{r}$ iff
$r'_1\leq_1{r_1}$ and $r'_2\leq_2{r_2}$. Let us prove that $\leq$ is a
simulation. 

Let $r'$ be such that $\trans{r'}{\action}{\system'}{M'}$ for some consistent $\system'$. Then, 
$r'=r'_1\join{r'_2}$ and $\system'=\system'_1\times\system'_2$. By
definition of the parallel composition, we have
$\trans{r'_i}{\action}{\system'_i}{M'_i}$ for $i=1,2$. Since
$r'_i\leq{r_i}$, we derive the existence (and uniqueness) of
consistent systems $S_i,i=1,2$ such that
$\trans{r_i}{\action}{\system_i}{M_i}$. Since $r=r_1\join{r_2}$ we
have $r_1\compat{r_2}$ and, thus, by definition of the parallel
composition, we deduce
\mbox{$\trans{r}{\action}{\system_1\times\system_2}{M}$}. 

It remains to show that $\system_1\times\system_2$ is consistent. To
prove this, remember that $\system'=\system'_1\times\system'_2$ is
consistent. Thus, there exist compatible $r'_1$ and $r'_2$ such that
$\produces{\system'_i}{r'_i}, i=1,2$.  By definition of the
simulations $\leq_i$, we deduce that
$\produces{\system_i}{r_i}, i=1,2$, which shows that
$\system_1\times\system_2$ is consistent.
\end{proof}

\section{Proofs regarding Probabilistic Automata}
\subsection{Proof of Theorem~\ref{erlgfuierhlpiu}}

Defining simulation relations for $\pa$ requires lifting relations,
from states to distributions over states. The formal definition for
this lifting, as given in Section\,4.1 of~\cite{Seg06-CONCUR},
corresponds to our Definition~\ref{hrgfuihsk}, when restricted to
purely probabilistic mixed systems.

The same holds for the strong simulation relation defined in
Section\,4.2 of the same reference: it is verbatim our
Definition~\ref{def:simulation}, when restricted to purely
probabilistic mixed systems. This proves the part of
Theorem~\ref{erlgfuierhlpiu} regarding simulation.

We move to parallel composition, for which the reader is referred to~\cite{DBLP:conf/concur/LynchSV03}, Section 3. For $P_1=(\alphabet,Q_1,q_{0,1},\ra_1)$ and $P_2=(\alphabet,Q_2,q_{0,2},\ra_2)$ two PA, their parallel composition is  $P=P_1\times{P_2}=(\alphabet,Q_1\times{Q_2},(q_{0,1},q_{0,2}),\ra)$, where 
\beq
\trans{(q_1,q_2)}{\action}{\proba_1 {\otimes} \proba_2}{}
 &\mbox{ iff }&
 \trans{q_i}{\action}{\proba_i}{i} \mbox{ for }i=1,2 
\label{rtogtrilko}
\eeq
So, on one hand we consider the \mmdp\ $M_P$.
On the other hand, we consider the parallel composition of the mappings $M_{P_1}$ and $M_{P_2}$, that is $M=M_{P_1}\times M_{P_2}=(\alphabet,\{\xi_1,\xi_2\},(q_{0,1},q_{0,2}),\ra_{12})$,
so that the state space is the domain of the pair $(\xi_1,\xi_2)$, namely $Q_1\times{Q_2}$, 
and, since there is no shared variable between the two \mmdp, the transition relation $\ra_{12}$ is given by:
\beq
\trans{(q_1,q_2)}{\action}{\system_1 {\times} \system_2}{12}
&\mbox{ iff }& \trans{q_i}{\action}{\system_i}{i} \mbox{ for }i=1,2
\label{toguihtuiokuygf}
\eeq
We thus need to show that 
\beq
\mbox{$M_P$ and $M$ are simulation equivalent.}
\label{gtiohrgtio}
\eeq
We will actually show that the identity relation between the two state spaces (both are equal to $Q_1\times{Q_2}$) is a simulation relation in both directions.

Observe first that (\ref{rtogtrilko}) and (\ref{toguihtuiokuygf}) differ in that the former involves a nondeterministic transition relatiobn, whereas the latter involves a deterministic transition function, mapping states to mixed systems.

Pick $(q_1,q_2)\in{Q_1}\times{Q_2}$ and consider a transition for $M_P$:
\[
\trans{(q_1,q_2)}{\action}{S}{M_P}=((\Omega,\Proba),\xi,(q_{0,1},q_{0,2}),\cons)
\]
where we have, for $S$:
\begin{itemize}
	\item $\Omega$ is the product of $n_1$ copies of $Q_1$ and $n_2$ copies of $Q_2$, where, for $i=1,2$, $n_i$ is the cardinality of the set $\{\proba_i\mid(q_i,\action,\proba_i)\in\ra_i\}$, so that $\omega$ identifies $n_1\times{n_2}$-tuple of states: $\omega=(q_{11},\dots,q_{1n_1};q_{21},\dots,q_{2n_2})$;
	\item $\Proba$ is the product of all probabilities belonging to set $$\{\proba_1\otimes\proba_2\mid(q_i,\action,\proba_i)\in\ra_i\}$$
	\item $\xi$ has domain $Q_1\times{Q_2}$;
	\item $(\omega,(q_1,q_2))\in\cons$ if and only if \[
	(q_1,q_2)\in\{(q_{1i_1},q_{2i_2})\mid i_1\in\{1,\dots,n_1\} \mbox{ and } i_2\in\{1,\dots,n_2\}\}\,.
	\]
\end{itemize}
Next, pick $(q_1,q_2)\in{Q_1}\times{Q_2}$ and consider a transition for $M$, see (\ref{toguihtuiokuygf}). We need to detail what $S_1\times{S_2}=((\Omega' ,\Proba' ),\xi' ,(q'_{0,1},q'_{0,2}),\cons')$ is. We have, for $S_1\times{S_2}$:
\begin{itemize}
	\item $\Omega'$ is still the product of $n_1$ copies of $Q_1$ and $n_2$ copies of $Q_2$;
	\item $\Proba'$ is the product $\Proba_1\otimes\Proba_2$, where $\Proba_i$ is the product of all probabilities belonging to set \mbox{$\{\proba_i\mid(q_i,\action,\proba_i)\in\ra_i\}$};
	\item $\xi'$ has domain $Q_1\times{Q_2}$;
	\item $(\omega,(q_1,q_2))\in\cons'$ if and only if \[
	(q_1,q_2)\in\{(q_{1i_1},q_{2i_2})\mid i_1\in\{1,\dots,n_1\} \mbox{ and } i_2\in\{1,\dots,n_2\}\}\,.
	\]
\end{itemize}
By associativity of $\otimes$, $\Proba'=\Proba$, whereas other items for $S$ on the one hand and other items for $S_1\times{S_2}$ on the other hand, are synctatically identical. Thus (\ref{gtiohrgtio}) follows.

\section{Proofs regarding Mixed Interfaces}
\subsection{Proof of Lemma~\ref{lemma-cleaning}}
\label{erpguioehpguio}
\begin{proof}
  \begin{itemize}
  \item We remove from $\contract$ inconsistent states $q$;
\begin{itemize}
\item if $\ntransmayindex{q}{\action}{}{\contract}$ and
  $\transmustindex{q}{\action}{\Systems^\probamust}{\contract}$ then
  $q$ cannot be involved in a simulation relation allowing to state
  that $M$ is a model of $\contract$ because of (\ref{eq-modele-may}) in
  the definition of the model relation.
\item if $\transmustindex{q}{\action}{}{\contract}$ and
  $\transmayindex{q}{\action}{\Systems^\probamay}{\contract}$ but
  $\Systems^\probamust\cap\Systems^\probamay$ contains no consistent
  system in the sense of Definition~\ref{slergiuhpiu} $q$ cannot be
  involved in a simulation relation allowing to state that $M$ is a
  model of $\contract$ because of (\ref{eq-modele-must}) in the
  definition of the model relation.
   \end{itemize}
   As a result, $q$ plays no role in the semantics of $\contract$ and
   its lack in $\prune{\contract}$ does not change the semantics.
 \item We remove from $\contract$ some may transitions to inconsistent
   states which could not be realized by any model $\contract$. \eproof
  \end{itemize}
\end{proof} 

\subsection{Proof of Lemma~\ref{eorgfuirhpu}}
\label{guioenhuio}
\begin{proof}
  The reflexity of $\refines$ follows immediately from
  Definition~\ref{def-refinement}.

  Now for the transitivity, assume that
  $\contract_1 \refines \contract_2$ and
  $\contract_2 \refines \contract_3$. with the respective refinement
  relations $\refines_{12}\;\subseteq\;{Q_1}\times{Q_2}$ and
  $\refines_{23}\;\subseteq\;{Q_2}\times{Q_3}$.

  Define now using notation (\ref{rbgjgnwoun}): \beq
  \refines_{13}&=&\refines_{12}\reldot\refines_{23}\,.
\label{comprelref}
\eeq

Let $q_1$ and $q_3$ such that $q_1\refines_{13}q_3$. By
\ref{comprelref}, we have $q_1\refines_{12}q_2$ and
$q_2\refines_{23}q_31$ for some $q_2$. Thus, for all $\action$ such
that $\transmayindex{q_1}{\action}{\Systems^\probamay_1}{1}$, we have
$\transmayindex{q_2}{\action}{\Systems^\probamay_2}{2}$  and
${\Systems^\probamay_1}\subseteq^{\refines_{12}}\,{\Systems^\probamay_2}$. Moreover, $\transmayindex{q_3}{\action}{\Systems^\probamay_3}{3}$  and
${\Systems^\probamay_2}\subseteq^{\refines_{23}}\,{\Systems^\probamay_3}$. By
Lemma~\ref{uweygfkiutygf}, we have
${\Systems^\probamay_1}\subseteq^{\refines_{13}}\,{\Systems^\probamay_3}$.

Similarly for must transitions, for all $\action$ such
that $\transmustindex{q_3}{\action}{\Systems^\probamust_3}{3}$, we have
$\transmustindex{q_2}{\action}{\Systems^\probamust_2}{2}$  and
${\Systems^\probamust_2}\subseteq^{\refines_{23}}\,{\Systems^\probamust_3}$. Moreover, $\transmustindex{q_1}{\action}{\Systems^\probamust_1}{1}$  and
${\Systems^\probamust_1}\subseteq^{\refines_{12}}\,{\Systems^\probamust_2}$. By
Lemma~\ref{uweygfkiutygf}, we have
${\Systems^\probamust_1}\subseteq^{\refines_{13}}\,{\Systems^\probamust_3}$.
A a result, we have $\contract_1 \refines \contract_3$.\eproof
\end{proof}

\subsection{Proof of Theorem~\ref{oer8w5t7ho0u}}
\label{eluihuihipgh}
\begin{proof}
  Assume
  $\contract_1\refines\contract_2$ and consider the refinement relation $\refines\;\subseteq\;{Q_1}\times{Q_2}$.
Let $M$ be a model of $\contract_1$ and let $(r,q_1)\in{R}\times{Q_1}$ 
satisfy 
$r{\models_1}\,q_{1}$.
Focus first on the \emph{may} transition relation.  By
(\ref{eq-modele-may}) applied to $\models_1$, for any $\action$ such that
\mbox{$\transindex{r}{\action}{\system_M}{M}$}
\beq
\mbox{$\transmayindex{q_{1}}{\action}{\Systems^\probamay_1}{\contract_1}$ and
$\system_M\in^{\models_1}{\Systems^\probamay_1}$ both hold.} 
\label{gtrjuhliu}
\eeq
Let $q_2\in{Q_2}$ be such that $q_1\refines{q_2}$. 
Using the first condition of (\ref{507tho5u0857}), we get
\beq
\mbox{$\transmayindex{q_2}{\action}{\Systems^\probamay_2}{2}$ and
${\Systems^\probamay_1}\subseteq^\refines\,{\Systems^\probamay_2}$}
\label{vuwguigqlig}
\eeq
Define  the relation:
$r\models_2{q_2} \Leftrightarrow \exists q_1\in{Q_1} : 
r\models_1{q_1}  \mbox{ and }
q_1\refines{q_2}$.
Using notation (\ref{rbgjgnwoun}), we have 
\beq
\models_2&=&\models_1\reldot\refines\,.
\label{elfiyubgpi}
\eeq
Now, let $q_2$ be such that $r\models_2{q_2}$. Combining (\ref{gtrjuhliu}) and (\ref{vuwguigqlig}) yields
\beq
\transmayindex{q_2}{\action}{\Systems^\probamay_2}{2} &\mbox{and}&
\system_M\in^{\models_1}{\Systems^\probamay_1}\subseteq^\refines{\Systems^\probamay_2}
\label{trhlyknmhjtokij}
\eeq
which, by (\ref{elfiyubgpi}) and Lemma~\ref{uweygfkiutygf}, yields
$\system_M\in^{\models_2}{\Systems^\probamay_2}$.
Combining this and (\ref{trhlyknmhjtokij}) shows that $r\models_2{q_2}$.
Focus next on the \must\ transition relation. Since $M$ is a model of $\contract_1$, 
(\ref{eq-modele-must}) applied to $\models_1$ yields the existence of $\system_M\in^{\models_1}\Systems^\probamust_1\subseteq^\refines{\Systems^\probamust_2}$ such that $\transindex{r}{\action}{\system_M}{M}$, which implies that (\ref{eq-modele-must})  holds for $\models_2$ by the same reasoning as before.
\end{proof}

\subsection{Proof of Theorem~\ref{kuygpgiohtoi}}
\label{erouigheguio}
\begin{proof}
  Using Theorem~\ref{oer8w5t7ho0u}, the previous statements
  follow from $
%
\contract_1 \wedge \contract_2 \refines \contract_i \mbox{ for } i=1,2$.
%
Take the first projection as the candidate refinement relation, namely: $(q_1,q_2)\refines{q_1}$ for $(q_1,q_2)$ and $q_1$ reachable from their respective initial states. Using the four rules of Definition~\ref{def-conjunction}, we get $\contract_1\underline{\wedge}\contract_2\refines\contract_1$, and thus
$\contract_1 \wedge
\contract_2=\consistent{\contract_1\underline{\wedge}\contract_2}\refines\contract_1$
since $\contract_1$ possesses no inconsistent state. The same holds
for $\contract_2$ by symmetry.
\end{proof}

\subsection{Proof of Theorem~\ref{rtpery9uy95e8}}
\label{epiougtrhpiough}
\begin{proof}
We successively prove the two statements. Regarding 
Statement~\ref{jiytdfiout}), the same proof holds as for associativity and commutativity of the conjunction.
Regarding Statement~\ref{wieytdfiwet}), 
Property (\ref{lrtuhlup}) is an immediate consequence of Definitions~\ref{rtghltughtui},~\ref{def-model} and~\ref{g07hortgig}.
Focus next on (\ref{rltiugyhpe}).
  Assume
 \beqq
\bea{rcr} 
\transmayindex{(q,q_2)}{\action}{\,\Systems^\probamay{\times}\Systems^\probamay_2}{\contract\cpara\contract_2}
\\ 
\transmustindex{(q,q_1)}{\action}{\,\Systems^\probamust{\times}\Systems^\probamust_1}{\contract\cpara\contract_1}
\eea
\label {ethhjytujkt}
\eeqq
By the rules of the composition, we deduce that the premises of  (\ref{507tho5u0857}) holds, so we can apply rule  (\ref{507tho5u0857}) since $\contract_1 \refines
  \contract_2$, which yields
 \beqq
\bea{l} 
\transmayindex{q_1}{\action}{\Systems^\probamay_1}{1}
\mbox{ and } \ {\Systems^\probamay_2}\subseteq^\refines\,{\Systems^\probamay_1} 
\\
\transmustindex{q_2}{\action}{\Systems^\probamust_2}{2} 
\mbox{ and } \ {\Systems^\probamust_2}\subseteq^\refines\,{\Systems^\probamust_1}
\eea
\label {ghpovkopbm}
\eeqq
which implies 
 \beqq
\bea{l} 
\transmayindex{(q,q_1)}{\action}{\Systems^\probamay{\times}\Systems^\probamay_1}{\contract\cpara\contract_1}
\mbox{ and } \ {\Systems^\probamay{\times}\Systems^\probamay_2}\subseteq^{\refines'}\,{\Systems^\probamay{\times}\Systems^\probamay_1}
\\ 
\transmustindex{(q,q_2)}{\action}{\Systems^\probamust{\times}\Systems^\probamust_2}{\contract\cpara\contract_2} 
\mbox{ and } \ {\Systems^\probamust{\times}\Systems^\probamust_2}\subseteq^{\refines'}\,{\Systems^\probamust{\times}\Systems^\probamust_1}
\eea
\label {ghpovkopbm}
\eeqq
where $\refines'$ is defined by $(q,q_2)\refines'(q,q_1)$ iff $q_2{\refines}q_1$. This shows that $\refines'$ is a refinement.
\end{proof}

\section{Proofs regarding CMC}
\subsection{Proof of Lemma~\ref{elriueoiufhyi}}
\label{eroiuhpiaeug}
\begin{proof}
Let us detail the semantics of mixed system $S$, see Definition~\ref{slergiuhpiu}. First, we draw $r'\in\Omega=R$ according to the probability $\Proba(r,.)$: this corresponds to the drawing of the next state in Markov Chain $\mc$. Second, we nondeterministically select $(r'',v(r''))$ in the state space $R{\times}2^A$ of $S$ so that $(r',r'',v(r''))\in\cons$. The only solution is $(r',r',v(r'))$, which provides us with the second component $v(r')$ of the state. The two semantics coincide.\eproof
\end{proof}

\subsection{Proof of Lemma~\ref{eruiehrpiufhio}}
\label{eoruifygouiy}
\begin{proof}
To the satisfaction relation $\simu\subseteq{R}{\times}{Q}$ following Definition~\ref{ywtfqwkfqkuqf}, we associate the relation 
$\models_\simu\ \subseteq\ (R{\times}2^{A_\mc})\times(Q{\times}{2^{A_\cmc}})$, 
 defined by
\beq
(r,\bar{r})\;\models_\simu\;(q,\bar{q})
&\mbox{ iff }&\left\{\bea{l}
r\ \simu\ q \\
\bar{r}=v_\mc(r) \\
\bar{q}=v_\cmc(q) \\
\restrict{A_\cmc}{\bar{r}}=\bar{q}
\eea\right.
\label{eltuigherpguio}
\eeq
Observe that, vice versa, we recover $\simu$ from $\models_\simu$ by keeping only the first condition of it.
We have to prove that
\beq
\mbox{
\begin{minipage}{11cm}
	 $\simu$ is a satisfaction relation for CMC if and only if $\models_\simu$ is a satisfaction relation for \mmi.
\end{minipage}
} \label{duyetfutuyt}
\eeq

\paragraph*{We first prove the ``only if'' part of (\ref{duyetfutuyt}) }Let $(r,q)$ satisfy $r\,\simu\;q$.
By (\ref{eltuigherpguio}), $\models_\simu$ is a relation between the states of \mmdp\ $M_\mc$ and \mmi\ $\contract_\cmc$. With reference to Definition~\ref{def-model},  to show that $\models_\simu$ is a satisfaction relation, it is enough to show that only \emph{may} transitions of $\contract_\cmc$ are allowed for $M_\mc$---the condition related to the \emph{must} transitions is vacuously satisfied.

Let $(r,\bar{r})\,\models_\simu\,(q,\bar{q})$ and $\transindex{(r,\bar{r})}{}{\system_\mc}{M_\mc}$, where $\system_\mc=((\Omega_\mc,\proba_\mc),X_\mc,\cons_\mc)$ is defined by applying (\ref{ewyjtdfwkdtf}) to $M_\mc$. We must prove that the latter transition is allowed by the \emph{may} transitions of \mmi\ $\contract_\cmc$, i.e., the target mixed system $\system_\mc$ satisfies condition (\ref{eq-modele-may}), meaning that 
\beq
\mbox{
\begin{minipage}{5.2cm}
	 $\transmayindex{(q,\bar{q})}{}{\Systems_\cmc}{\contract_\cmc}$ and there exists $\system_\cmc\in\Systems_\cmc$ such that $\system_\mc\NMPlift{\models_\simu}\system_\cmc$.
\end{minipage}
}
\label{eroihoeifjjydtfyutf}
\eeq
To construct a mixed system $\system_\cmc$ satisfying (\ref{eroihoeifjjydtfyutf}), we start from $r\,\simu\,q$, which provides us with a transition sub-probability $\Delta$ satisfying the conditions~\ref{detfwiyt}) of Definition~\ref{ywtfqwkfqkuqf}. We then consider the mixed system $\system_\cmc=((\Omega_\cmc,\proba_\cmc),X_\cmc,\cons_\cmc)$, where:
\begin{itemize}
	\item $\Omega_\cmc=Q$;
	\item $\proba_\cmc=\Proba(r,.)\Delta$, which belongs to $\varphi(q)$ by Definition~\ref{ywtfqwkfqkuqf};
	\item $\cons_\cmc\subseteq\Omega_\cmc{\times}({Q}{\times}2^{A_\cmc})$ consists of the triples $(q',(q',\bar{q}'))$, where $q'$ ranges over $Q$, $\bar{q}'=v_\cmc(q')$, and $v_\cmc$ relates to $v_\mc$ by $v_\cmc(q')=\restrict{A_\cmc}{v_\mc(r')}$. By Condition~\ref{dtyuywtdf}) of Definition~\ref{ywtfqwkfqkuqf}, we get $\restrict{A_\cmc}{v_\mc(r')}\in{V_\cmc}(q')$.
\end{itemize}
Let us prove that the so constructed mixed system $\system_\cmc$ satisfies $\system_\mc\NMPlift{\models_\simu}\system_\cmc$. We must find a weighting function $w:R{\times}Q\ra[0,1]$ satisfying the conditions of Definition~\ref{hrgfuihsk}. We claim that the wanted weighting function is
\[
w(r',q')=\Proba(r,r')\Delta(r',q')\,.
\]
We now prove that Conditions~\ref{sggouigh}) and~\ref{leiurlyui}) of Definition~\ref{hrgfuihsk} are satisfied by $w$. We begin with Condition~\ref{leiurlyui}). We have $\sum_{r'}w(r',q')=\sum_{r'}\Proba(r,r')\Delta(r',q')=\Proba(r,.)\Delta(q')$ using (\ref{wcuetrcde}). On the other hand, 
\[
\sum_{q'}w(r',q')=\sum_{q'}\Proba(r,r')\Delta(r',q')=\Proba(r,r')\sum_{q}\Delta(r',q')=\Proba(r,r')
\]
 by Condition~\ref{uetydyut} of Definition~\ref{ywtfqwkfqkuqf}. 

Focus next on Condition~\ref{sggouigh}) of Definition~\ref{hrgfuihsk}.
Pick $(r',q';(r',v_\mc(r'))$ such that $w(r',q'){>}0$, which implies $\Delta(r',q'){>}0$. Then by Condition~\ref{ioucdgfouy} of Definition~\ref{ywtfqwkfqkuqf}, $r'\simu\,q'$ holds. On the other hand, we have $(q',q',\restrict{A_\cmc}{v_\mc(r')})\in\cons$, showing that $(q',\restrict{A_\cmc}{v_\mc(r')})$ is the  state of $S_\cmc$ wanted in Condition~\ref{sggouigh}) of Definition~\ref{hrgfuihsk}. Hence, the so constructed mixed system $\system_\cmc$ satisfies $\system_\mc\NMPlift{\models_\simu}\system_\cmc$.
This proves the ``only if''  part of (\ref{duyetfutuyt}). 

\paragraph*{We now move to the ``if'' part of (\ref{duyetfutuyt})}
Let $(r,\bar{r})\,\models_\simu\,(q,\bar{q})$. Then by the definition (\ref{eltuigherpguio}) of relation $\models_\simu$, we deduce that $r\,\simu\,q$ holds and we must prove that $\simu$ is a satisfaction relation for CMC. To this end we use the fact that $\models_\simu$ is a satisfaction relation for \mmi, namely: if $\transindex{(r,\bar{r})}{}{\system_\mc}{M_\mc}$, then there exists $\system_\cmc\in\Systems_\cmc$ such that $\system_\mc\NMPlift{\models_\simu}\system_\cmc$. The target system $\system_\cmc$ takes the form $\system_\cmc=((\Omega,\proba),X,\cons)$, where:
\begin{itemize}
	\item $\Omega=Q$;
	\item $\proba(q')=\sum_{r'\in{R}}w(r',q')$, where $w(r',q')$ is the weighting function associated to the lifting of relation $\models_\simu$;
	\item $\cons\subseteq\Omega{\times}({Q}{\times}2^{A_\cmc})$ consists of the triples of the form $(q',q',\restrict{A_\cmc}{v_\mc(r')})$, where $r'$ ranges over $R$ and $r'\,\simu\,q'$. 
\end{itemize}
In proving that the relation $\simu$ inferred from $\models_\simu$ is a satisfaction relation for CMC, we must find the $\Delta$ occurring in  Definition~\ref{ywtfqwkfqkuqf}. We define it as
\[
\Delta(r',q') = \left\{\bea{l}\displaystyle\frac{w(r',q')}{\Proba(r,r')} \mbox{ if } \Proba(r,r')>0 \\ [3mm] 0 \mbox{ otherwise.}
\eea\right.
\]
The conditions of Definition~\ref{ywtfqwkfqkuqf} are satisfied. This finishes the proof of the ``if'' part and the lemma is proved.
\end{proof}

\subsection{Proof of Lemma~\ref{elriuhepiu}}
\label{erlivgfuebhiu}
\begin{proof}
The proof follows the same lines as for Lemma~\ref{eruiehrpiufhio}. To the refinement relation $\simu\subseteq{Q_2}{\times}{Q_1}$ following Definition~\ref{dweiytdfutdrfyf}, we associate the relation 
\[
\refines_\simu\ \subseteq\ (Q_2{\times}{2^{A_{2}}})\times(Q_1{\times}{2^{A_{1}}})
\]
 defined by
\beq
(q_2,\bar{q}_2)\;\refines_\simu\;(q_1,\bar{q}_1)
&\mbox{ iff }&\left\{\bea{l}
q_2\ \simu\ q_1 \\
\bar{q}_2=v_{2}(q_2) \\
\bar{q}_1=v_{1}(q_1) \\
\bar{q}_1=\bar{q}_{2_{\left\downarrow{A_1}\right.}}
\eea\right.
\label{oerigfyubrpiu}
\eeq
By (\ref{oerigfyubrpiu}), $\refines_\simu$ is a relation between the states of \mmi\ $\contract_{\cmc_2}$ and  $\contract_{\cmc_1}$.
Observe that, vice versa, we recover $\simu$ from $\refines_\simu$ by keeping only the first condition of it. We have to prove that
\beq
\mbox{
\begin{minipage}{9cm}
	 $\simu$ is a weak refinement relation for CMC if and only if $\refines_\simu$ is a refinement relation for \mmi{s}.
\end{minipage}
} \label{erigfuiuygf}
\eeq

\paragraph*{We first prove the ``only if'' part of (\ref{erigfuiuygf})} Let $(q_2,q_1)$ satisfy $q_2\,\simu\;q_1$. 
With reference to Definition~\ref{def-refinement}, to show that $\refines_\simu$ is a refinement relation, it is enough to show the first condition of (\ref{507tho5u0857})---the condition related to the \emph{must} transitions is vacuously satisfied.	

From $(q_2,\bar{q}_2)\;\refines_\simu\;(q_1,\bar{q}_1)$ and $\transmayindex{q_2}{}{\Systems^\probamay_2}{2}$, we have to deduce  
\[
\transmayindex{q_1}{}{\Systems^\probamay_1}{1}
\mbox{ and } \ {\Systems^\probamay_2}\subseteq^{\refines_\simu}{\Systems^\probamay_1}\,,
\]
which translates as
\beq
\mbox{
\begin{minipage}{5cm}
	for every $S_2\in\Systems^\probamay_2$ we can find  $S_1\in\Systems^\probamay_1$ such that $\system_2\refines_\simu^\Systems\system_1$.
\end{minipage}
}
\label{fwiytedfwiuwtf}
\eeq
Let $\system_{2,v}$ have the form $\system_{2,v}=((\Omega_2,\proba_2),X_2,\cons_{2,v})$ following (\ref{uytdfiytfiyt}).
To construct a mixed system $\system_1$ satisfying (\ref{fwiytedfwiuwtf}) we start from $q_2\,\simu\;q_1$, which provides us with a transition sub-probability $\Delta$ satisfying the Conditions~\ref{idfgkukeyufgr}) of Definition~\ref{dweiytdfutdrfyf}. We then consider the mixed system $\system_1=((\Omega_1,\proba_1),X_1,\cons_1)$, where:
\begin{itemize}
	\item $\Omega_1=Q_1$;
	\item $\proba_1=\proba_2\Delta$, which belongs to $\varphi_1(q_1)$ by Definition~\ref{dweiytdfutdrfyf};
	\item $\cons_1\subseteq\Omega_1{\times}({Q_1}{\times}{2^{A_{1}}})$ consists of the triples of the form $(q'_1,q'_1,\restrict{A_{1}}{v(q'_2)})$, where $v$ is the one arising in the definition of $\system_{2,v}$ and $q'_1$ ranges over $Q_1$. By Condition~\ref{ieufgierufyg}) of Definition~\ref{dweiytdfutdrfyf}, we get $\restrict{A_{1}}{v(q'_2)}\subseteq{V_{1}}(q'_1)$.
\end{itemize}
Let us prove that the mixed system $\system_1$ satisfies $\system_{2,v}\NMPlift{\refines_\simu}\system_1$. We must find a weighting function $w:Q_2{\times}Q_1\ra[0,1]$ satisfying the conditions of Definition~\ref{hrgfuihsk}. We claim that the wanted weighting function is
\[
w(q'_2,q'_1)=\proba_2(q'_2)\Delta(q'_2,q'_1)\,.
\]
Let us prove that Conditions~\ref{sggouigh}) and~\ref{leiurlyui}) of Definition~\ref{hrgfuihsk} are satisfied by $w$. We begin with Condition~\ref{leiurlyui}). We have $\sum_{q'_2}w(q'_2,q'_1)=\sum_{q'_2}\proba_2(q'_2)\Delta(q'_2,q'_1)=\proba_1(q'_1)$ by definition of $\proba_1$. On the other hand, 
\[
\sum_{q'_1}w(q'_2,q'_1)=\sum_{q'_1}\proba_2(q'_2)\Delta(q'_2,q'_1)=\proba_2(q'_2)
\]
 by Condition~\ref{efgeorufygeu} of Definition~\ref{dweiytdfutdrfyf}. 

Focus next on Condition~\ref{sggouigh}) of Definition~\ref{hrgfuihsk}.
Pick $(q'_2,q'_1;(q'_2,v(q'_2))$ such that $w(q'_2,q'_1)>0$, which implies $\Delta(q'_2,q'_1)>0$. Then by Condition~\ref{geyugfwkugfwyt} of Definition~\ref{dweiytdfutdrfyf}, $q'_2\simu\,q'_1$ holds. On the other hand, we have $(q'_1,q'_1,\restrict{A_{1}}{v(q'_2)})\in\cons_1$, showing that $(q'_1,\restrict{A_{1}}{v(q'_2)})$ is the  state of $S_1$ wanted by Condition~\ref{sggouigh}) of Definition~\ref{hrgfuihsk}. Hence, the so constructed mixed system $\system_1$ satisfies $\system_{2,v}\NMPlift{\refines_\simu}\system_1$.
This proves the ``only if''  part of (\ref{erigfuiuygf}).

\paragraph*{We next move to the ``if'' part of (\ref{erigfuiuygf})}
Let $(q_2,\bar{q}_2)\,\refines_\simu\,(q_1,\bar{q}_1)$. Then by the definition (\ref{oerigfyubrpiu}) of relation $\refines_\simu$, we deduce that $q_2\,\simu\,q_1$ holds and we must prove that $\simu$ is a weak refinement relation for CMC. To this end we use the fact that $\refines_\simu$ is a modal refinement relation for \mmi, namely: if
 $\transmayindex{q_2}{}{\Systems^\probamay_2}{2}$, then  
$\transmayindex{q_1}{}{\Systems^\probamay_1}{1}
\mbox{ and } \ {\Systems^\probamay_2}\subseteq^{\refines_\simu}{\Systems^\probamay_1}$. 
That is, for any $\system_{2,v_2}\in\Systems^\probamay_2$, of the form  $\system_{2,v_2}=((\Omega_2,\proba_2),X_2,\cons_{2,v_2})$ following (\ref{uytdfiytfiyt}), there exists $\system_{1,v_1}=((\Omega_1,\proba_1),X_1,\cons_{1,v_1})\in\Systems^\probamay_1$ such that 
\beq
\system_{2,v_2}\NMPlift{\refines_\simu}\system_{1,v_1}\,.
\label{wuqdtrwdfuity}
\eeq
Condition (\ref{wuqdtrwdfuity}) and Definition~\ref{hrgfuihsk} of the lifting of a relation together imply the existence of a weighting function $w(q'_2,q'_1)$ satisfying the following conditions:
\begin{enumerate}
	\item For every triple $(q'_2,q'_1;(q'_2,v_2(q'_2))$ such that 
	\[
	w(q'_2,q'_1){>}0\mbox{ and }(q'_2,(q'_2,v_2(q'_2)){\in}\cons_{2,v_2}\,,
	\]
	there exists $(q'_1,v_1(q'_1))$ such that 
	\[
	(q'_1,(q'_1,v_1(q'_1))\in\cons_{1,v_1}\mbox{ and }
	(q'_2,\bar{q}'_2)\refines_\simu(q'_1,\bar{q}'_1)\,.
	\]
 
	\item $\sum_{q'_2}w(q'_2,q'_1)=\proba_1(q'_1)$ and $\sum_{q'_1}w(q'_2,q'_1)=\proba_2(q'_2)$. 
\end{enumerate}
In proving that the relation $\simu$ inferred from $\refines_\simu$ is a weak refinement relation for CMC, we must find the $\Delta$ occurring in  Definition~\ref{dweiytdfutdrfyf}. We define it as
\[
\Delta(q'_2,q'_1) = \left\{\bea{l}\displaystyle\frac{w(q'_2,q'_1)}{\proba_2(q'_2)} \mbox{ if } \proba_2(q'_2)>0 \\ [3mm] 0 \mbox{ otherwise.}
\eea\right.
\]
The conditions of Definition~\ref{dweiytdfutdrfyf} are satisfied. This finishes the proof of the ``if'' part and the lemma is proved. 
\end{proof}


\end{document}